\DeclareMathAlphabet{\mathpzc}{OT1}{pzc}{m}{it}
\newcommand{\A}{\mathcal{A}}
\newcommand{\B}{\mathcal{B}}
\newcommand{\Rea}{\mathbb{R}}
\newcommand{\Comp}{\mathbb{C}}
\newcommand{\Id}{\hat{\mathbb{I}}}
\newcommand{\Nat}{\mathbb{N}}
\newcommand{\Ho}{\mathcal{H}_\omega}
\newcommand{\Hi}{\mathcal{H}}
\newcommand{\boundo}{\mathcal{B}(\mathcal{H})}
\newcommand{\boundoo}{\mathcal{B}(\mathcal{H}_{\omega})}
\newcommand{\tco}{\mathcal{B}_{1}(\mathcal{H})}
\newcommand{\borel}{\mathscr{B}(\mathbb{R})}
\newcommand{\borell}[1]{\mathscr{B}(\mathbb{R}^{#1})}
\newcommand{\borelx}{\mathscr{B}(\mathsf{X})}
\newcommand{\unit}{\mathsf{I}}
\newcommand{\E}{\mathcal{E}}
\newcommand{\Ex}{\mathbb{E}}
\newcommand{\Tr}[1]{\mbox{Tr}\left[#1\right]}
\newcommand{\X}{\mathsf{X}}
\newcommand{\topo}{(\X,\mathscr {T})}
\newcommand{\PS}{(\Omega,\E,P)}
\newcommand{\vona}{\mathcal{V}(\Hi)}
\newcommand{\Int}{\mathbb{Z}}
\begin{document}
	
	\newtheorem{definition}{Definition}
	\newtheorem{proposition}{Proposition}
	\newtheorem{theorem}{Theorem}
	\newtheorem{lemma}{Lemma}
	\newtheorem{corollary}{Corollary}
	\newtheorem{assumptions}{Assumptions}
	\newtheorem{assumption}{Assumption}
	
	\renewcommand{\thefootnote}{\fnsymbol{footnote}}
	
	\preprint{}
	
	\title{On non-commutativity in quantum theory (I): \\ from classical to quantum probability.}
	
	\author{Curcuraci Luca}
	\affiliation{Department of Physics, University of Trieste, Strada Costiera 11 34151, Trieste, Italy \\ Istituto Nazionale di Fisica Nucleare, Trieste Section, Via Valerio 2 34127, Trieste, Italy}
	\email{ Curcuraci.article@protonmail.com; \\
		luca.curcuraci@phd.units.it }
	
	\date{\today}
	
	\date{\today}
	
	\begin{abstract}
		
		
		A central feature of quantum mechanics is the non-commutativity of operators used to describe physical observables. In this article, we present a critical analysis on the role of non-commutativity in quantum theory, focusing on its consequences in the probabilistic description. Typically, a random phenomenon is described using the measure-theoretic formulation of probability theory. Such a description can also be done using algebraic methods, which are capable to deal with non-commutative random variables (like in quantum mechanics). Here we propose a method to construct a non-commutative probability theory starting from an ordinary measure-theoretic description of probability. This will be done using the entropic uncertainty relations between random variables, in order to evaluate the presence of non-commutativity in their algebraic description. 
		
	\end{abstract}
	
	\keywords{Quantum mechanics, Probability, Quantum probability Algebraic probability, Entropic uncertainty relations, Non-commutativity.}
	\maketitle
	
	\tableofcontents
	
	\section{Introduction}
	
	Measure-theoretic and algebraic probability theory na\"{i}vely correspond to classical and quantum probability, respectively. They were born almost in the same years, the first from the pioneering works of A. Kolmogorov, while the second with the works of J. von Neumann. Nevertheless, they remained separated for many years. The measure-theoretic approach seems to be applicable to any \textquotedblleft ordinary/everyday random phenomena", while the algebraic approach was originally motivated to describe \textquotedblleft quantum random phenomena": two aspects of randomness which are considered deeply different.
	
	The measure-theoretic approach to probability seems to privilege the probability while the algebraic approach is more about random variables (i.e. the mathematical representation of features of a random phenomenon). Despite they appear as two completely different approaches, they are intimately related. Typically people consider the first as a special case of the second, but this point of view could be too simple. The recent interest in quantum information arises questions about the meaning of the quantization procedure, which may be seen as a change in the probabilistic description of natural random phenomena. Nevertheless, this change in the description of random phenomena is suggested also in other fields which are not really related to the quantum world, like for example finance\cite{segal1998black,meng2016quantum,meng2015quantum},  social science (for a very interesting experiment, see Ch.6 in \cite{khrennikov2014ubiquitous}), or cognitive science \cite{khrennikov2014ubiquitous,aerts1997applications}.
	
	The purpose of this article is to present a method which allows to evaluate the presence of non-commutativity in the probabilistic description of a given random phenomena. Such method will be used in \cite{LC2} and \cite{LC3} to explicitly construct models which reproduce the non-commutativity between position and velocity of a particle, as in non-relativistic quantum mechanics. In section \ref{SECT1}, the ordinary measure-theoretic approach to probability will be reconsidered in an Hilbert-space setting. A review of the basic facts about algebraic (in general non-commutative) probability spaces will be done in section \ref{section2}, which allows us to present the so called entropic uncertainty relations under a different light in section \ref{section4}. Finally, a method for the construction of a non-commutative probability space starting from a collection of ordinary probability spaces using entropic uncertainty relations will be presented in section \ref{section5}.

	\section{Classical probability in Hilbert space language}\label{SECT1}
	
	In this section, we collect a series of results about measure-theoretic probability theory and its algebraic formulation \cite{massen1998quantum}. We will start reviewing the standard measure-theoretic probability, pointing out the algebraic structures which are already present in the ordinary formulation. After the introduction of the necessary mathematical tools \cite{moretti2013spectral}, we will show how standard measure-theoretic probability look like in Hilbert spaces.
	
	\subsection{Measure-theoretic probability, i.e. classical probability}\label{Sec1a}
	
	With the term \emph{classical probability} we will refer to Kolmogorov's formulation of probability theory based on measure space. According to this framework, the description of a random phenomenon is made by using the triple $\PS$, called \emph{probability space}, where
	\begin{enumerate}
		\item[i)] $\Omega$ is the \emph{sample space}, and it represents the set of all possibile elementary outcomes of a random experiment;
		\item[ii)] $\E$ is a \emph{$\sigma$-algebra on $\Omega$}, namely a collection of subsets of $\Omega$ which is closed under complement and countable union. It can be understood as the set of all propositions (also called \emph{events}) about the random phenomenon whose truth value can be tested with an experiment;
		\item[iii)] $P$ is a \emph{probability measure}, namely a map $P:\E \rightarrow [0,1]$, which is normalised ($P(\Omega) = 1$) and $\sigma$-additive. If $A\in \E$ is an event, then we will interpret $P(A)$ as the degree of belief that the event $A$ really happens. 
	\end{enumerate}
	Note that in Kolmogorov's formulation, a probability space is nothing but a measure space where the measure is normalised. A random variable $X$ in this context, i.e. a feature of a random phenomenon, is simply any $P$-measurable map from the probability space to another measurable space\footnote{A measurable space is the couple $(M,\mathcal{M})$ where $M$ is a set and $\mathcal{M}$ is a $\sigma$-algebra on this set. When equipped with a measure it becomes a measure space.} $(M,\mathcal{M})$, hence $X: \PS \rightarrow (M,\mathcal{M})$. The image of the probability measure $P$, under the map $X$, induces a (probability) measure on $(M,\mathcal{M})$, $\mu_X:= P\circ X^{-1}$, which is called \emph{probability distribution of the random variable $X$}. Statistical information about a random variable $X$ can be obtained from the \emph{expectation value}, defined as
	\begin{equation*}
		\Ex[X] := \int_{\Omega} X(\omega) P(d\omega)
	\end{equation*}
	Sometimes, to emphasise the probability measure which we are using to compute the expectation we write $\Ex_P$. 
	
	The definition given for $P$ is rather obscure since we should explain the meaning of \textquotedblleft degree of belief". This is a signature of the fact that the notion of probability is a primitive concept in the measure-theoretic formulation. A method we can use to measure $P$ is explained in the \emph{(weak) law of large numbers}.
	\begin{theorem}
		Let $\{A_i\}_{i \in \Nat}$ be a collection of \emph{independent events}, namely $P(\cap_i A_i) = \Pi_i P(A_i)$. If $P(A_i) = p$ for all $i$, namely they have all the same probability, for any $\epsilon > 0$, we have
		\begin{equation*}
			\lim_{n \rightarrow \infty} P\left( \bigg\{ \omega \in \Omega \bigg| \bigg|\frac{K_n(\omega)}{n} - p\bigg| < \epsilon \bigg\} \right) = 1
		\end{equation*}
		where $K_n(\omega) = \sum_{j=1}^n \chi_{A_j}(\omega)$ where $\chi_{A_j}(\omega)$ is the indicator function for the set $A_j$.
	\end{theorem}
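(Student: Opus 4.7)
The plan is to reduce the statement to a second-moment estimate on the empirical frequency $K_n/n$ and then apply Chebyshev's inequality. The key observation is that each indicator $\chi_{A_j}$ is a bounded random variable on $\PS$ taking values in $\{0,1\}$, with $\Ex[\chi_{A_j}] = P(A_j) = p$ and $\Ex[\chi_{A_j}^2] = p$, so $\mathrm{Var}(\chi_{A_j}) = p(1-p)$. The independence hypothesis on the events $\{A_i\}$ translates, via the identity $\chi_{A\cap B} = \chi_A \chi_B$ and $\Ex[\chi_A] = P(A)$, into pairwise uncorrelatedness of the indicators: $\Ex[\chi_{A_i}\chi_{A_j}] = P(A_i \cap A_j) = P(A_i)P(A_j) = \Ex[\chi_{A_i}]\Ex[\chi_{A_j}]$ for $i \neq j$.

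From this, the first step is to compute the mean and variance of $K_n/n$. By linearity of the expectation, $\Ex[K_n/n] = \tfrac{1}{n}\sum_{j=1}^n \Ex[\chi_{A_j}] = p$. Using pairwise uncorrelatedness, the cross terms in the variance vanish and one obtains
\begin{equation*}
\mathrm{Var}\!\left(\frac{K_n}{n}\right) = \frac{1}{n^2}\sum_{j=1}^n \mathrm{Var}(\chi_{A_j}) = \frac{p(1-p)}{n}.
\end{equation*}
The second step is to apply Chebyshev's inequality to the random variable $K_n/n - p$: for every $\epsilon > 0$,
\begin{equation*}
P\!\left(\left\{\omega \in \Omega \,\bigg|\, \left|\frac{K_n(\omega)}{n}-p\right| \geq \epsilon\right\}\right) \leq \frac{\mathrm{Var}(K_n/n)}{\epsilon^2} = \frac{p(1-p)}{n\epsilon^2}.
\end{equation*}
Taking complements (note that $\E$ is closed under complementation, so the set in the theorem is genuinely in $\E$) and passing to the limit $n \to \infty$, the right-hand side vanishes, yielding the stated convergence to $1$.

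The only delicate point I anticipate is the interpretation of the independence hypothesis as written: the statement $P(\cap_i A_i)= \Pi_i P(A_i)$ is a single equation, whereas what is actually required is pairwise independence (which is strictly weaker than mutual independence but is the content one needs here). I would either read the author's condition as shorthand for mutual independence of the family $\{A_i\}_{i\in\Nat}$, from which pairwise independence is immediate, or simply invoke the weaker pairwise hypothesis, which is all the variance computation above uses. Beyond this clarification the argument is entirely routine: the substantive ingredients are just the Markov/Chebyshev inequality (itself a consequence of monotonicity of the integral with respect to $P$) and the bilinearity of $\Ex$, both of which are built into the measure-theoretic framework recalled in Section \ref{Sec1a}.
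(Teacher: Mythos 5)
Your proof is correct, and it is the standard Chebyshev argument for the weak law of large numbers. Note that the paper itself offers no proof of this theorem: it is stated as a classical result and the surrounding text only discusses its interpretation (how $K_n/n$ lets one \emph{measure} $P$ via frequencies), so there is no authorial proof to compare against. Your computation of $\Ex[K_n/n]=p$ and $\mathrm{Var}(K_n/n)=p(1-p)/n$, followed by Chebyshev's inequality and passage to the complement, is exactly the textbook route and is sound. You were also right to flag the hypothesis as written: $P(\cap_i A_i)=\Pi_i P(A_i)$ taken literally is a single identity about the full intersection and does not by itself yield the pairwise factorization $P(A_i\cap A_j)=P(A_i)P(A_j)$ that your variance computation uses; reading it as shorthand for mutual (hence pairwise) independence of every finite subfamily is the intended and charitable interpretation, and pairwise independence is indeed all your argument needs.
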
 
	Let us explain the meaning of this theorem, and what it tells us about the measurement of probability. First of all, we should accept that $P(A) =1$ means to be sure that the event $A$ is true. Assumed this, the meaning of this theorem is hidden in the function $K_n(\omega)$. Consider the following collection of events
	\begin{equation*}
		A_j := \{ \mbox{in the trial $j$ we found } \omega \in A\}  \qquad \forall j \in \Nat.
	\end{equation*}
	For this collection of events, the function $K_n(\omega)$ is just the number of times we observe $\omega \in A$ by repeating the observation $n$ times. In addition the independence hypothesis in the theorem ensures that the observation in the $i$-th trial does not influence the $j$-th trial. Finally the requirement $P(A_i) = p$, $\forall i \in \Nat$, is a quite natural requirement: the probability that $\omega \in A$ is the same independently on the trial. At this point the meaning of the theorem is clear: for sufficiently many trials, the number of times we find $\omega \in A$ normalised to the total number of trials, tend to be the number $p$ with probability $1$. Notice that, despite this theorem tells us how to \emph{measure} $P$ (via frequencies), we cannot use it to define the meaning of $P$: it would be a recursive definition. For this reason, probability in the measure-theoretic framework  is a primitive notion.
	
	\paragraph{Example: The die}
	From the mathematical point of view, a die can be described using the probability space $\PS$. For example let us set
	\begin{enumerate}
		\item[i)] $\Omega := \{1,2,3,4,5,6\}$; 
		\item[ii)] $\E$ is the power set of $\Omega$;
		\item[iii)] $P(\omega) = \sum_{i=1}^{6} p_i \delta_{i,\omega}$ with $p_i \in [0,1]$ for all $i$ and $\sum_{i=1}^{6} p_i =1$.
	\end{enumerate}
	The die in this case is represented by a random variable $D:\Omega \rightarrow \Nat$, which is just the identity function, $D(\omega) := \omega$ and the distribution of the random variable is given by the probability measure directly. Another possibility can be the following
	\begin{enumerate}
		\item[i)] $\Omega := \{1,2,3,4,5,6,7,8\}$;
		\item[ii)] $\E$ is the power set of $\Omega$;
		\item[iii)] $P(\omega) = \sum_{i=1}^{8} q_i \delta_{i,\omega}$ with $q_i \in [0,1]$ for all $i$ and $\sum_{i=1}^{8} q_i =1$.
	\end{enumerate}
	On this probability space  the die can be described using the random variable
	\begin{equation*}
		D(\omega) := 
		\begin{cases}
			\omega \mbox{ if } \omega \in \{1,3,4,6\} \\
			2 \mbox{ if } \omega \in \{5,8\} \\
			5 \mbox{ if } \omega \in \{2,7\}
		\end{cases}
	\end{equation*}
	Again $D(\Omega) = \{1,2,3,4,5,6\}$, but this time the probability distribution of $D$ does not coincide with the probability measure, in particular we have that
	$\mu_D = \{q_1,q_5+q_8,q_3,q_4,q_2 + q_7,q_6\}$.

	\subsection{The algebra of functions $L_{\infty}\PS$}\label{sec1b}
	
	Let $\PS$ be a probability space and consider the functions $f: \Omega \rightarrow \Comp$ which are measurable with respect to the $\sigma$-algebra $\E$. We also require $f$ to be bounded, namely $\|f\| := \sup_{\omega \in \Omega} |f(\omega)| < \infty$.  Then we can define the following class of functions
	\begin{equation*}
		\mathscr{L}_\infty \PS := \{f:\Omega \rightarrow \Comp | \mbox{$f$ is $\E$-measurable}, \|f\| < \infty\}
	\end{equation*}
	With the equivalence relation $f \sim g$ whenever $P(\{\omega \in \Omega | f(\omega) = g(\omega)\}) = 1$, we can define the following object $L_{\infty} \PS = \mathscr{L}_\infty \PS / \sim$. Defining the operation of sum, multiplication by a scalar and multiplication between functions in the usual way, $L_{\infty} \PS$ becomes an algebra of functions. Finally, using the \emph{essential supremum norm} $\|f\|_\infty := \mbox{ess sup} f = \inf_{\alpha \in \Rea}P\left( \{\omega \in \Omega | |f(\omega)| < \alpha\} \right)$, $L_{\infty} \PS$ is an abelian $C^*$-algebra of functions ($C^*$ means that $\|f^*f\|_{\infty} = \|f\|^2_{\infty}$ which is true for the complex conjugation $^*$ and the essential supremum norm $\| \cdot \|_\infty$, see section \ref{APdef} for more details). Note that $\mbox{ess sup} f \leqslant \sup f$.
	
	This object encodes, in an algebraic way, all the information encoded in the underlying probability space. Clearly $\PS$ determines uniquely $L_\infty \PS $, but the opposite is not exactly true. Indeed, given $L_\infty \PS$ we may construct a $\sigma$-algebra by setting
	\begin{equation*}
		\tilde{\E} := \{ p \in L_\infty \PS | p=p^* = p^2 \}
	\end{equation*}
	but this is not isomorphic to the original $\E$, since we identified everywhere $P$-equal functions in the construction of $L_\infty \PS$. $\tilde{\E}$ is a \emph{measure algebra}. Nevertheless this is an advantage instead of a limitation. Indeed, measure algebra is a coherent way to exclude set of zero measure from the probability space describing the random phenomenon (see Sec. 1.7 in \cite{petersen1989ergodic}). Over this $\sigma$-algebra, we can define a probability measure $\tilde{P}: \tilde{\E} \rightarrow [0,1]$ as $\tilde{P}(f) := \phi(f)$ for any $f$ which is $\tilde{\E}$-measurable, where $\phi$ is a positive normalised linear functional defined to be $\phi(f) = \int_{\Omega} f(\omega) P(d\omega)$. Summarising, starting from the algebra $L_\infty \PS$ we can construct a probability space $(\Omega,\tilde{\E},\tilde{P})$ which is equivalent, up to zero measure set, to the probability space $(\Omega,\E,P)$.
	
	Consider an ordinary random variable $X:\PS \rightarrow (M,\mathcal{M})$. Since $(M,\mathcal{M},\mu_X)$ is a probability space as well, we can associate to it an abelian $C^*$-algebra of functions. In this picture, $X$ can be seen as a linear map between algebras which respects the multiplication, namely a $C^*$-\emph{algebra homomorphism}. Thus we can say that the algebraic analogous of the random variable $X$ is the $C^*$-algebra homomorphism $x_X: L_\infty \PS \rightarrow L_\infty (M,\mathcal{M},\mu_X)$. We can see that a random phenomenon described in measure theoretic language, can be equivalently described using (abelian) algebras: this is part of the algebraic approach to probability theory. We conclude this section by observing that, in the algebraic approach, the role of the random variables is central: the elements of $L_\infty \PS$ are functions on $\PS$, i.e random variables.
	
    \subsection{From probability spaces to abelian von Neumann algebras}\label{APdef}\label{sec1c}
    
    We have seen that the information encoded in $\PS$, can be encoded in an equivalent manner in the algebra $L_\infty \PS$. Now, we will establish a link between the algebra $L_\infty$ and a suitable von Neumann algebra of operators over some Hilbert space.
    
    In general, an algebra $\A$ is a vector space equipped with a product operation. Typically such product is assumed to be associative and in some case, it can be commutative.
    An algebra can have or not have the unit element with respect to this multiplication but in what follows we will always consider algebras with unit. We will always consider algebras having a \emph{norm} defined on it, labeled by $\| \cdot \|$. It is also useful to consider algebras equipped with an additional map $^*:\A \rightarrow \A$, such that $(a^*)^* = a$, which is called \emph{involution} (examples of involutions are the complex conjugation for functions or the adjoint operation for operators). At this point, we may define what is a $C^*$-algebra.
    \begin{definition}
    	Let $\A$ be an algebra with a norm $\| \cdot \|$ and an involution $^*$. If $\A$ is complete with respect to the norm $\|\cdot \|$ we call this algebra \emph{$^*$-algebra}.  If in addition,    \begin{equation*}
    		\| a^* a \| = \|a \|^2
    	\end{equation*}
    	we say that $\A$ is a \emph{$C^*$-algebra}.
    \end{definition}
    Completeness of $\A$ is understood in the usual way: all the convergent sequences in $\A$ with respect to a given norm are also Cauchy sequences.
    Given a $^*$-algebra $\A$, a generic element $a \in \A$ is said to be \emph{self-adjoint} if $a = a^*$, while it is said to be positive (and we will write $a \geqslant 0$) if we can write $a = b^*b$, for some $b\in A$.
    \begin{definition}\label{State-def}
    	Let $\A$ be a $^*$-algebra , a \emph{state over }$\A$ is a linear functional $\phi:\A \rightarrow \Comp$ which is positive ($\phi(aa^*) \geqslant 0$ for any $a \in \A$) and normalised ($\phi(\unit) = 1$, where $\unit$ is the unit of $\A$).
    \end{definition}
    Note that the definitions above are very abstract in the sense that we do not need to define explicitly the sum, the product, the norm or the involution. For this reason $\A$ is called \emph{abstract algebra} if such information are not declared. When all the features of the algebra are explicited, we speak of \emph{concrete algebra}. Let us restrict our attention to the case of algebras of operators in some Hilbert space $\Hi$, and in particular to $\A =\boundo$ (the bounded operators over an Hilbert space $\Hi$) which is a concrete algebra. Thanks to the notion of positivity, we have a natural ordering operation $\geqslant$ between the elements of the algebra, i.e. given two operators $\hat{A}_1$ and $\hat{A}_2$, the writing $\hat{A_1} \geqslant \hat{A_2}$ means $\hat{A}_1 - \hat{A}_2 \geqslant 0$. 
    \begin{definition}
    	Let $\A$ be an operator algebra and $\{\hat{A}_i\}_{i=1,2,\cdots}$ be an increasing sequence of operators in $\A$ \emph{with strong limit} $s-\lim_{n \rightarrow \infty} \hat{A}_n =\hat{A}$, namely $\hat{A}_1\leqslant \hat{A}_2 \leqslant \cdots$ and $\lim_{n \rightarrow \infty} \|\hat{A}_n - \hat{A}\| = 0$ for some $\hat{A} \in \A$. A state $\phi$ is said to be \emph{normal} if $\lim_{n \rightarrow \infty} \phi(\hat{A}_n) = \phi(\hat{A})$.
    \end{definition}    
    A normal state $\phi$ on $\boundo$ can be written as $\phi(\cdot) = \Tr{\hat{\rho} \mbox{ }\cdot\mbox{ }}$ for some $\hat{\rho} \in \tco$, where $\tco$ labels the set of trace-class operators over $\Hi$ (see Th. 7.1.12 in \cite{kadison2015fundamentals}).
    \begin{definition}
    	Let $\A$ be an algebra of operators and $\phi:\A \rightarrow \Comp$ a state on it. Take some $\hat{A}\in \A$, if $\phi(\hat{A}^*\hat{A}) = 0$ implies $\hat{A} = 0$, then $\phi$ is said \emph{faithful}.
    \end{definition}
    Among algebras of operators a very important class is the one of von Neumann algebras.
    \begin{definition}
    	Let $\Hi$ be an Hilbert space, a \emph{von Neumann algebra} $\vona$ is a $^*$-sub-algebra of $\boundo$ which is strongly closed (i.e. the strong limit of any sequence of operators in $\vona$ converge to some operator which is still in $\vona$).
    \end{definition}
    In general, any von Neumann algebra is a $C^*$-algebra, but the opposite is not true. Von Neumann algebras are concrete algebras, however in general one should consider more abstract algebras, not necessarily composed of operators, hence it is useful to introduce also the notion of representation.
    \begin{definition}
    	Let $\A$ be an algebra with involution and $\Hi$ an Hilbert space. An homomorphism $\pi: \A \rightarrow \boundo$ preserving the involution is called \emph{representation of $\A$ on $\Hi$}. A representation is said \emph{faithful} if it is one-to-one.
    \end{definition}
    
    We now have all the notions needed to state the main theorem of this section. Consider the algebra $L_\infty \PS$ and for any $f \in L_\infty \PS$ define the operator $\hat{M}_f$ on the Hilbert space $L_2\PS$ as
    \begin{equation*}
    	\hat{M}_f \psi(\omega) = f(\omega) \psi(\omega) \mspace{30mu} \psi(\omega) \in L_2(\Omega,\E,P)
    \end{equation*}
    Clearly, such a representation is faithful and represents $L_\infty \PS$ as multiplicative operators on $L_2 \PS$. This is the link mentioned in the beginning. More formally we have the following theorem.
    \begin{theorem}\label{theo2}
    	Let $\PS$ be a probability space. Then the algebra $\mathcal{V}_c(L_2\PS):= \{\hat{M}_f | f \in L_\infty\PS \}$ is an abelian von Neumann algebra on the Hilbert space $L_2\PS$ and
    	\begin{equation*}
    		\phi_P: \hat{M}_f \mapsto \int_{\Omega} f(\omega) P(d\omega)
    	\end{equation*}
    	is a faithful normal state on $\mathcal{V}_c(L_2\PS)$.
    \end{theorem}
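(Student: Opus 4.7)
The plan is to verify the three structural claims in turn: that $\mathcal{V}_c(L_2\PS)$ is a well-defined abelian $*$-subalgebra of $\boundoo$ with $\Hi = L_2\PS$, that it is strongly closed (hence a von Neumann algebra), and that $\phi_P$ is a state enjoying normality and faithfulness.

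First I would show that the assignment $f \mapsto \hat{M}_f$ is well-defined on equivalence classes and produces bounded operators. Given $f \in L_\infty\PS$ and $\psi \in L_2\PS$, one has $\int_\Omega |f(\omega)\psi(\omega)|^2 P(d\omega) \leqslant \|f\|_\infty^2 \|\psi\|_2^2$, so $\hat{M}_f$ is bounded with $\|\hat{M}_f\|_{op} \leqslant \|f\|_\infty$; the reverse inequality follows from testing on indicator functions of level sets $\{|f|> \|f\|_\infty - \epsilon\}$, which have strictly positive measure by definition of the essential supremum. From the pointwise algebraic identities $\hat{M}_{\alpha f + \beta g} = \alpha \hat{M}_f + \beta \hat{M}_g$, $\hat{M}_{fg} = \hat{M}_f \hat{M}_g = \hat{M}_g \hat{M}_f$, and $\langle \hat{M}_f\psi,\varphi\rangle = \langle\psi,\hat{M}_{\bar f}\varphi\rangle$, the map is a $*$-homomorphism whose image is abelian. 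Together with the norm identity this already shows $\mathcal{V}_c(L_2\PS)$ is an abelian $C^*$-subalgebra of $\boundoo$.

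The main obstacle is establishing that $\mathcal{V}_c(L_2\PS)$ is actually strongly (equivalently weakly) closed, which upgrades it from a $C^*$-algebra to a von Neumann algebra. My route would be to prove the stronger statement that $\mathcal{V}_c(L_2\PS)$ is \emph{maximal abelian}, i.e.\ $\mathcal{V}_c(L_2\PS)' = \mathcal{V}_c(L_2\PS)$, so that by the double commutant theorem it coincides with its bicommutant and is therefore a von Neumann algebra. Concretely, take $\hat{T} \in \mathcal{V}_c(L_2\PS)'$; in the $\sigma$-finite setting one applies $\hat{T}$ to the constant function $\mathbf{1} \in L_2\PS$ (or, in general, to a strictly positive $L_2$-representative built from a countable measurable decomposition of $\Omega$) to obtain $g := \hat{T}\mathbf{1}$. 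Commutation with every $\hat{M}_f$ then forces $\hat{T}\psi = g\psi$ on the dense set of simple $\psi$'s, and boundedness of $\hat{T}$ forces $g \in L_\infty\PS$ with $\|g\|_\infty = \|\hat{T}\|_{op}$, so $\hat{T} = \hat{M}_g \in \mathcal{V}_c(L_2\PS)$. This step is the subtle one because it requires either $\sigma$-finiteness of $P$ (automatic here since $P(\Omega)=1$) or a separability assumption on $\E$; I would flag the underlying measure-theoretic hypothesis explicitly.

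Finally, for the state $\phi_P$: linearity and the normalisation $\phi_P(\hat{M}_{\mathbf{1}}) = 1$ are immediate; positivity follows from $\phi_P(\hat{M}_f^* \hat{M}_f) = \int_\Omega |f(\omega)|^2 P(d\omega) \geqslant 0$. Normality reduces, via the definition of strong limit stated in the excerpt, to verifying that if $\hat{M}_{f_n}$ is an increasing sequence with $\|\hat{M}_{f_n} - \hat{M}_f\|_{op} \to 0$ then $\int f_n\, dP \to \int f\, dP$; since $\|\hat{M}_{f_n - f}\|_{op} = \|f_n - f\|_\infty$, this follows from $|\phi_P(\hat{M}_{f_n}) - \phi_P(\hat{M}_f)| \leqslant \|f_n - f\|_\infty \to 0$ (the same conclusion holds under the standard monotone-bounded strong-operator convergence by dominated convergence). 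Faithfulness is the implication $\phi_P(\hat{M}_f^* \hat{M}_f) = \int |f|^2 \, dP = 0 \Rightarrow f = 0$ $P$-a.e., which precisely means $\hat{M}_f = 0$ in $\mathcal{V}_c(L_2\PS)$ because elements of $L_\infty\PS$ are identified modulo $P$-null sets. Assembling these pieces yields the theorem.
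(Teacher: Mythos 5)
Your proof is correct, but your route to the central claim --- that $\mathcal{V}_c(L_2\PS)$ is strongly closed --- differs from the paper's. The paper argues directly: it takes a sequence $\hat{M}_{f_n}$ converging strongly to some $\hat{X}$, sets $f := \hat{X}\mathbf{1}$, shows $f \in L_\infty\PS$ by a level-set estimate on $E_\epsilon = \{\,|f|^2 \geqslant \|\hat{X}\|^2 + \epsilon\,\}$ via Cauchy--Schwarz, and then identifies $\hat{X} = \hat{M}_f$ by density of $L_\infty\PS$ in $L_2\PS$. You instead prove the stronger statement that $\mathcal{V}_c(L_2\PS)$ is maximal abelian, $\mathcal{V}_c(L_2\PS)' = \mathcal{V}_c(L_2\PS)$, and invoke the double commutant theorem. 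The technical core is shared (apply the candidate operator to $\mathbf{1}$, control the resulting function in $L_\infty$ by boundedness, extend by density), but your framing buys two things: first, maximal abelianness is a strictly stronger structural fact than being a von Neumann algebra; second, the double commutant theorem delivers closure in the strong \emph{and} weak operator topologies for nets, whereas the paper's argument only tests sequential strong limits, which is not literally the same as strong closure (the strong topology is not metrizable on all of $\mathcal{B}(L_2\PS)$) --- a gap your approach silently repairs. You are also more careful than the paper on the state: the paper dismisses faithfulness as ``clear'' and deduces normality from the double commutant theorem without detail, whereas you spell out positivity, the a.e.-vanishing argument for faithfulness, and the convergence estimate for normality (correctly noting the ambiguity in the paper's definition of ``strong limit'' for increasing sequences). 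Your flag about $\sigma$-finiteness is apt but moot here, since $P(\Omega)=1$.
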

    \begin{proof}
    	See Appendix A.
    \end{proof}
    More generally, the results obtained till now can be reversed: starting from a generic abelian von Neumann algebra we may construct a probability space \cite{massen1998quantum}.
    \begin{theorem}\label{th111}
    	Let $\A$ be an abelian von Neumann algebra of operators and $\phi$ a faithful normal state on it. Then there exist a probability space $\PS$ and a linear correspondence between $\A$ and $L_\infty \PS$, $\hat{A}\mapsto f_{\hat{A}}$, such that
    	\begin{equation*}
    		\begin{split}
    			f_{\hat{A}\hat{B}} = f_{\hat{A}} f_{\hat{B}} \mspace{50mu} f_{\hat{A}^*} = (f_{\hat{A}})^* \\
    			\| f_{\hat{A}} \|_\infty = \| \hat{A} \| \mspace{50mu} \Ex[f_{\hat{A}}] = \phi(\hat{A}).
    		\end{split}
    	\end{equation*}
    \end{theorem}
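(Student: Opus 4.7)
The plan is to exploit the Gelfand duality between abelian unital $C^*$-algebras and compact Hausdorff spaces, and then use the faithful normal state $\phi$ to promote the resulting spectrum into a measure space via the Riesz--Markov--Kakutani representation theorem. The key idea is that an abelian von Neumann algebra is in particular an abelian unital $C^*$-algebra, so Gelfand's theorem already carries out most of the algebraic work; the normality and faithfulness of $\phi$ are then what upgrade the construction from continuous functions on a compact space to $L_\infty$ of a probability space.

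First I would apply the Gelfand representation to $\A$: this produces a compact Hausdorff space $\Omega := \{\chi: \A \to \Comp \mid \chi \text{ is a non-zero multiplicative linear functional on } \A\}$, the Gelfand spectrum, and an isometric $^*$-isomorphism $\Gamma: \A \to C(\Omega)$ given by $\Gamma(\hat{A})(\chi) = \chi(\hat{A})$. By construction $\Gamma$ is multiplicative, involution-preserving, and isometric (with $C^*$-norm on the left and sup-norm on the right), which already secures the first three identities in the statement. Next, $\psi := \phi \circ \Gamma^{-1}$ is a positive normalised linear functional on $C(\Omega)$, so the Riesz--Markov--Kakutani theorem supplies a unique regular Borel probability measure $P$ on $\E := \Bor(\Omega)$ with $\psi(f) = \int_\Omega f(\omega) P(d\omega)$. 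Defining $f_{\hat{A}}:=\Gamma(\hat{A})$, regarded as an element of $L_\infty\PS$, then delivers $\Ex[f_{\hat A}] = \phi(\hat A)$, which is the last identity; faithfulness of $\phi$ moreover guarantees that $P$ has full support on $\Omega$, so that the map $\hat A \mapsto f_{\hat A}$ survives passage to $P$-equivalence classes without collapsing.

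The main obstacle is that the Gelfand transform a priori only lands in $C(\Omega)$, which is generally a proper subspace of $L_\infty\PS$, so I still have to argue that the correspondence is surjective onto $L_\infty\PS$. This is where the hypothesis that $\A$ is a \emph{von Neumann} algebra (and not just a $C^*$-algebra) together with the normality of $\phi$ enters: without them, the Gelfand spectrum of $\A$ would in general only be extremally disconnected and one would be stuck with continuous functions. My route would be via the GNS representation associated with $\phi$. Faithfulness makes this representation injective, and normality makes it ultraweakly continuous, so its image is a strongly closed $^*$-subalgebra of $\mathcal{B}(L_2\PS)$ that contains the multiplication operators by functions in $C(\Omega)$. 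Because $C(\Omega)$ is already weak-$*$ dense in $L_\infty\PS$, the corresponding multiplication operators are strongly dense in $\mathcal{V}_c(L_2\PS)$, and strong closure forces equality. Composing with the identification $\hat{M}_f \leftrightarrow f$ from Theorem~\ref{theo2} then yields the required correspondence $\hat{A} \mapsto f_{\hat{A}} \in L_\infty\PS$, and the four identities extend from $C(\Omega)$ to all of $L_\infty\PS$ by strong continuity and density.
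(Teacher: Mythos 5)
The paper gives no proof of this theorem at all---it defers entirely to \cite{massen1998quantum}---and your proposal correctly reconstructs the standard argument from that reference (Gelfand transform onto $C(\Omega)$, Riesz--Markov--Kakutani applied to $\phi\circ\Gamma^{-1}$ to produce $P$, faithfulness of $\phi$ to get full support of $P$ so that the sup and essential-sup norms agree and nothing collapses under $P$-equivalence, and normality to pass from $C(\Omega)$ to all of $L_\infty\PS$), so there is nothing in the paper itself to compare against. The one step you should spell out more carefully is the claim that the image of $\A$ under the GNS representation is strongly closed: this rests on the nontrivial fact that a faithful normal $^*$-homomorphism of a von Neumann algebra has weakly closed range, together with the identification of $\Hi_\phi$ with $L_2\PS$ carrying $\pi_\phi(\hat{A})$ to $\hat{M}_{\Gamma(\hat{A})}$; once that is granted, your density argument (continuous multiplication operators are strongly dense in $\mathcal{V}_c(L_2\PS)$ by Lusin's theorem) does force surjectivity onto $L_\infty\PS$ up to $P$-null sets.
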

    Summarising, the theorem above tells that any abelian $C^*$-algebra of functions, which is constructed from a probability space, can be described in an equivalent way by using multiplicative operators over a suitable Hilbert space that one can construct from the probability space itself. It is important to observe that the state $\phi_P$ is not constructed from the vectors of $L_2\PS$. Finally, despite we are describing a classical probability space using an Hilbert space, this Hilbert space changes when we change the probability measure $P$.
	
	\subsection{Essentials of spectral theory for bounded operators}\label{sec1d}
	
	Here we introduce the basic notions and theorems about the spectral theory of bounded operators which will be used later \cite{moretti2013spectral}. The central object of the spectral theory is the notion of PVM. In order to define them in the whole generality, we recall that a \emph{second-countable topological space} $\topo$ is a set $\X$ with a topology $\mathscr {T}$ (collection of open sets) whose elements can be seen as the countable union of basis sets (i.e. elements of $\mathscr {T}$  which cannot be seen as unions of other sets).
	\begin{definition}\label{PVMdef}
		Let $\Hi$ be an Hilbert space, $\topo$ a second-countable topological space and $\borelx$ the borel $\sigma$-algebra on $\X$. The map $\hat{P}: \borelx \rightarrow \boundo$ is called \emph{projector-valued measure} (PVM) on $\X$, if the following conditions holds
		\begin{enumerate}
			\item[i)] $\hat{P}(B) \geqslant 0$ for any $B \in \borelx$;
			\item[ii)] $\hat{P}(B)\hat{P}(B') = \hat{P}(B \cap B')$ for any $B,B' \in \borelx$;
			\item[iii)] $\hat{P}(\X) = \Id$;
			\item[iv)] if $\{B_n\}_{n \in \Nat} \subset \borelx$ with $B_n \cap B_m = \{\varnothing\}$ for $n \neq m$, then
			\begin{equation*}
				\sum_{n = 0}^{\infty} \hat{P}(B_n) = \hat{P} \left( \bigcup_{n = 0}^\infty B_n\right)
			\end{equation*}
		\end{enumerate}
		The support of the PVM is the closed set defined as $\mbox{supp}(\hat{P}) := \X/ \{\cup A |A \in \mathscr {T}, \hat{P}(A) = \hat{\mathbb{O}}\}$. When $\X = \Rea^n$, $\hat{P}$ is said bounded if $\mbox{supp}(\hat{P})$ is a bounded set.
	\end{definition}
	In the above definition $\hat{\mathbb{O}}$ is simply the null operator. Because a $(\Int, \mathcal{P}(\Int) )$, where $\mathcal{P}(A)$ means the power set of $A$, and $(\Rea^n, \mathcal{T}_o)$, where $\mathcal{T}_o$ is the ordinary euclidean topology, are second-countable topological spaces, with the above definition we may treat at the same time the continuous and discrete cases. PVMs are useful because they allow to define operator-valued integrals with respect to them. In fact, if we consider a bounded function $g:\X \rightarrow\Comp$ which is measurable, we can define
	\begin{equation*}
		\hat{F}(g) := \int_{\X} g(x)\hat{P}(dx)
	\end{equation*}
	which is called \emph{integral operator in $\hat{P}$} and it is a (bounded) operator on $\Hi$. We observe that 
	\begin{equation*}
		\int_{\X} g(x)\hat{P}(dx) =\int_{\mbox{supp}(\hat{P})} g(x)\hat{P}(dx)
	\end{equation*}
	for any measurable bounded function $g$, because the PVM vanishes for all $A \notin \mbox{supp}(\hat{P})$. One can prove that, if $g$ is measurable and bounded, then $\| \int_{\X} g(x) \hat{P}(dx) \| \leqslant \| g|_{\mbox{supp}(\hat{P})} \|_\infty$ and also that the integral operator is positive for $g$ positive. Related to PVM, another important quantity is the \emph{spectral measure}.
	\begin{definition}
		Let $\psi \in \Hi$, the map $\mu_{\psi}:\borelx \rightarrow \Rea$ defined as
		\begin{equation*}
			\mu_{\psi}(E) := \langle \psi | \int_{\X} \chi_E(x)\hat{P}(dx) \psi \rangle \mspace{30mu} E \in \borelx,
		\end{equation*}
		where $\langle \cdot | \cdot \rangle$ is the scalar product of $\Hi$, is a real and positive measure on $\Rea$ called \emph{spectral measure associated to} $\psi$.
	\end{definition}
	Note that if $\psi \in \Hi$ is normalised, then also $\mu_{\psi}$ is. An important property for any bounded and measurable function $g$ on $\X$ is the following:
	\begin{equation*}
		\langle \psi | \int_{\X} g(x) \hat{P}(dx) \psi \rangle = \int_{\X} g(x) \mu_{\psi}(dx)
	\end{equation*}
	which is simply a consequence of the fact that we may always write $g(x)$ as limit of a sum of indicator functions. This last equality is very important in quantum physics. Since $\int_{\X} f(x) \hat{P}(dx)$ is an operator on $\Hi$, i.e. $\hat{F}(g)$, the above equality tells that the quantum mechanical expectation $\langle \psi | \hat{F}(g) \psi \rangle$ coincides with the ordinary expectation value $\Ex[g]$ when it is computed with the spectral measure. As we will see, this is a very general feature of self-adjoint operators.
	
	At this point we can state (without proof) the two central theorems of the spectral theory for bounded operators. The first important theorem is the \emph{spectral decomposition theorem for self-adjoint operators in $\boundo$} which tells that every self-adjoint operator in $\boundo$ can be constructed integrating some function with respect to a specific PVM, and it is completely determined by it. 
	\begin{theorem}[Th. 8.54 in \cite{moretti2013spectral}]
		Let $\Hi$ be an Hilbert space and $\hat{A} \in \boundo$ a self-adjoint operator. 
		\begin{enumerate}
			\item[a)] There exists a unique and bounded PVM $\hat{P}^{(\hat{A})}$ on $\Rea$ such that
			\begin{equation*}
				\hat{A} = \int_{\mbox{supp}(\hat{P}^{(\hat{A})})} x  \hat{P}^{(\hat{A})}(dx);
			\end{equation*}
			\item[b)] $\sigma(\hat{A}) = \mbox{supp}(\hat{P}^{(\hat{A})})$, where $\sigma(\hat{A})$ is the spectrum of the operator $\hat{A}$;
			\item[c)] If $f$ is a bounded measurable function on $\sigma(\hat{A})$, the operator $f(\hat{A}) := \int_{\sigma(\hat{A})} f(x)\hat{P}^{(\hat{A})}(dx)$ commutes with every operator in $\boundo$ which commutes with $\hat{A}$.
		\end{enumerate}
	\end{theorem}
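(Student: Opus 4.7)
The strategy is the standard functional-calculus route: build a $*$-homomorphism from the bounded Borel functions on $\sigma(\hat{A})$ into $\boundo$, read off the PVM from the characteristic functions of Borel sets, and recover $\hat{A}$ as the integral of the identity function against this PVM. Parts (a), (b), (c) then all fall out of the construction.

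First I would establish the polynomial functional calculus. Since $\hat{A}$ is self-adjoint, the $C^*$-identity gives $\|\hat{A}^n\| = \|\hat{A}\|^n$, so the spectral radius equals $\|\hat{A}\|$ and $\sigma(\hat{A}) \subset \Rea$ is compact. The map $p \mapsto p(\hat{A})$ on polynomials is isometric when polynomials carry the sup norm on $\sigma(\hat{A})$; by Stone--Weierstrass it extends uniquely to an isometric $*$-homomorphism $\Phi: C(\sigma(\hat{A})) \to \boundo$. For each $\psi \in \Hi$ the map $f \mapsto \langle \psi | \Phi(f) \psi\rangle$ is a positive bounded linear functional on $C(\sigma(\hat{A}))$, so the Riesz--Markov--Kakutani theorem yields a unique finite regular Borel measure $\mu_\psi$ representing it; polarization gives complex measures $\mu_{\psi,\varphi}$. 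For any bounded Borel function $g$ on $\Rea$, the sesquilinear form $(\psi, \varphi) \mapsto \int g\, d\mu_{\psi,\varphi}$ is bounded by $\|g\|_\infty \|\psi\|\|\varphi\|$, hence the Riesz representation theorem for bounded sesquilinear forms produces an operator $g(\hat{A}) \in \boundo$; verifying that $g \mapsto g(\hat{A})$ is a $*$-homomorphism extending $\Phi$ yields the Borel functional calculus.

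Setting $\hat{P}^{(\hat{A})}(B) := \chi_B(\hat{A})$ for $B \in \borel$ produces orthogonal projections satisfying the PVM axioms of Definition \ref{PVMdef}: multiplicativity and unitality follow from $\chi_{B_1}\chi_{B_2} = \chi_{B_1 \cap B_2}$ and $\chi_{\Rea} = 1$, while $\sigma$-additivity in the strong operator topology comes from scalar dominated convergence applied to each spectral measure $\mu_\psi$. Integrating $g(x) = x$ gives (a). For (b), a point $\lambda$ lies outside $\mbox{supp}(\hat{P}^{(\hat{A})})$ iff some open neighbourhood of $\lambda$ has zero projection; then $(x - \lambda)^{-1}$ is bounded on the support and $\int (x - \lambda)^{-1}\hat{P}^{(\hat{A})}(dx)$ inverts $\hat{A} - \lambda \Id$, so $\lambda \notin \sigma(\hat{A})$, and the converse is analogous. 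Uniqueness of the PVM: any other PVM satisfying (a) yields the same moments $\langle \psi | \hat{A}^n \psi\rangle$ for all $n$, hence the same spectral measures by Riesz--Markov uniqueness, so the projections must coincide. Finally (c) holds because $f(\hat{A})$ is a strong limit of polynomials in $\hat{A}$ via the Borel calculus, so it lies in the bicommutant of $\{\hat{A}\}$ and commutes with every operator commuting with $\hat{A}$. The main obstacle is bootstrapping the scalar dominated-convergence statements for the $\mu_\psi$ into strong-operator $\sigma$-additivity and multiplicativity of the extended calculus, since the correspondence between bounded operators and bounded sesquilinear forms is not automatically preserved by pointwise limits; this requires some care with polarization identities and uniform boundedness, but the argument is standard.
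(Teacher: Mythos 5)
The paper does not prove this theorem at all: it is stated explicitly ``without proof'' and imported verbatim as Th.~8.54 of Moretti's book, so there is no in-paper argument to compare yours against. Your proposal is the standard textbook construction --- polynomial calculus and Stone--Weierstrass to get the continuous functional calculus, Riesz--Markov to extract the spectral measures $\mu_{\psi,\varphi}$, extension to bounded Borel functions via bounded sesquilinear forms, and the PVM read off from characteristic functions --- and it is correct in outline, including the uniqueness argument via moments of compactly supported measures and the support-equals-spectrum argument for part (b).

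One step I would tighten is part (c): for a general bounded Borel $f$, the operator $f(\hat{A})$ is \emph{not} literally a strong limit of a sequence of polynomials in $\hat{A}$, so the phrase as written overreaches. The clean version of your own idea is a monotone-class argument: the set of bounded Borel functions $g$ such that $g(\hat{A})$ commutes with every $\hat{B}\in\boundo$ commuting with $\hat{A}$ contains $C(\sigma(\hat{A}))$ (by norm limits of polynomials) and is closed under bounded pointwise sequential limits (by dominated convergence applied to the measures $\mu_{\hat{B}^{*}\psi,\varphi}$ and $\mu_{\psi,\hat{B}\varphi}$), hence exhausts the bounded Borel functions. This lands $f(\hat{A})$ in the bicommutant of $\{\hat{A},\Id\}$ without any claim about polynomial approximation in the strong topology. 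With that repair the proof is complete and is exactly the argument the cited reference carries out.
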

	The second important theorem is the so called \emph{spectral representation theorem of self-adjoint operators in $\boundo$}. This theorem tells that every bounded self-adjoint operator on $\Hi$ can be represented as a multiplicative operator on some $L_2$ Hilbert space, which is basically constructed from its spectrum.
	\begin{theorem}[Th. 8.56 in \cite{moretti2013spectral}]\label{Spectheo1}
		Let $\Hi$ be an Hilbert space, $\hat{A} \in \boundo$ a self-adjoint operator and $\hat{P}^{(\hat{A})}$ the associated PVM. Then
		\begin{enumerate}
			\item[a)] $\Hi$ splits as Hilbert sum $\Hi = \oplus_{i \in I}\Hi_i$ (with $I$ at most countable if $\Hi$ is separable), where $\Hi_i$ are closed and mutually orthogonal subspaces such that
			\begin{enumerate}
				\item[i)] $\forall i \in I$, then $\hat{A}\Hi_i \subset \Hi_i$;
				\item[ii)] $\forall i \in I$ there exist a positive finite borel measure $\mu_i$ on the Borel sets of $\sigma(\hat{A})\subset\Rea$, and a surjective isometry $\hat{U}_i:\Hi_i \rightarrow L_2(\sigma(\hat{T}),\mu_i)$ such that
				\begin{equation*}
					\hat{U}_i \left( \int_{\sigma(\hat{A})}f(x)\hat{P}^{(\hat{A})}(dx)\right)\bigg|_{\Hi_i} \hat{U}_i^{-1} = f \cdot
				\end{equation*}
				for any bounded measurable $f$, where $f\cdot$ means multiplication by $f$ in $L_2(\sigma(\hat{A}),\mu_i)$.
			\end{enumerate}
			\item[b)] $\sigma(\hat{A}) = \mbox{supp}(\{\mu_i\}_{i \in I})$ where $\mbox{supp}(\{\mu_i\}_{i \in I})$ is the complement to the set of $\lambda \in \Rea$  for which there is an open set $B_\lambda \subset \Rea$ such that $\lambda \in B_\lambda$ and $\mu_i(A_\lambda) = 0$ for all $i \in I$.
			\item[c)] If $\Hi$ is separable there exist a measure space $(M_A,\Sigma_A,\mu_A)$ with $\mu_A(M_A) < +\infty$, a bounded map $F_A: M_A\rightarrow\Rea$ and a unitary operator $\hat{U}_A:\Hi \rightarrow L_2(M_A,\mu_A)$ satisfying
			\begin{equation*}
				\left(\hat{U}_A \hat{A} \hat{U}^{-1}_A g\right)(x) = F_A(x)g(x)
			\end{equation*}
			for any $g \in \Hi$.
		\end{enumerate}
	\end{theorem}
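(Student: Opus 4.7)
The plan is to build the decomposition via cyclic subspaces, using the functional calculus provided by the first spectral theorem quoted just above. For every $\psi \in \Hi$, introduce the closed $\hat{A}$-invariant subspace
\begin{equation*}
\Hi_\psi := \overline{\{f(\hat{A})\psi \st f \text{ bounded Borel on } \sigma(\hat{A})\}},
\end{equation*}
on which $\psi$ is, by construction, cyclic for $\hat{A}$. Zorn's lemma applied to the collection of families of mutually orthogonal cyclic subspaces, ordered by inclusion, produces a maximal such family $\{\Hi_i\}_{i \in I}$ with cyclic generators $\psi_i$ (which we may normalise). Maximality forces $\Hi = \oplus_{i\in I}\Hi_i$: otherwise any nonzero vector in the orthogonal complement would generate a further cyclic subspace orthogonal to the family, contradicting maximality. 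Separability of $\Hi$ makes $I$ at most countable.

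For each $i$, set $\mu_i := \mu_{\psi_i}$. The key identity
\begin{equation*}
\|f(\hat{A})\psi_i\|^2 = \langle \psi_i \st (\bar{f}f)(\hat{A})\psi_i \rangle = \int_{\sigma(\hat{A})}|f(x)|^2\, \mu_i(dx)
\end{equation*}
shows that $f \mapsto f(\hat{A})\psi_i$ defines an isometry from the bounded Borel functions (dense in $L_2(\sigma(\hat{A}),\mu_i)$) onto a dense subspace of $\Hi_i$; it therefore extends uniquely to a surjective isometry $\hat{U}_i^{-1}:L_2(\sigma(\hat{A}),\mu_i) \to \Hi_i$. The intertwining relation $\hat{U}_i\, f(\hat{A})|_{\Hi_i}\, \hat{U}_i^{-1} = f\cdot $ is then immediate from $f(\hat{A})g(\hat{A})\psi_i = (fg)(\hat{A})\psi_i$, which the isometry transports into pointwise multiplication by $f$ on $g \in L_2(\sigma(\hat{A}),\mu_i)$. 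This establishes part (a).

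For part (b), the first spectral theorem already gives $\sigma(\hat{A}) = \mbox{supp}(\hat{P}^{(\hat{A})})$, so it suffices to prove $\mbox{supp}(\hat{P}^{(\hat{A})}) = \mbox{supp}(\{\mu_i\}_{i\in I})$. If $\lambda \notin \mbox{supp}(\{\mu_i\})$, there exists an open $B_\lambda \ni \lambda$ with $\mu_i(B_\lambda) = 0$ for every $i$; decomposing any $\psi = \sum_i \psi^{(i)}$ along $\oplus_i \Hi_i$ and using orthogonality of the $\Hi_i$ under $f(\hat{A})$ yields $\mu_\psi(B_\lambda) = \sum_i \mu_{\psi^{(i)}}(B_\lambda) = 0$. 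Since this holds for all $\psi$, $\hat{P}^{(\hat{A})}(B_\lambda) = \hat{\mathbb{O}}$, so $\lambda \notin \mbox{supp}(\hat{P}^{(\hat{A})})$. The reverse inclusion follows by reversing the argument on the cyclic generators $\psi_i$ themselves.

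For part (c), pick strictly positive weights $\{c_i\}_{i \in I}$ with $\sum_i c_i < \infty$ (e.g.\ $c_i = 2^{-i}$ after enumerating $I$), let $M_A$ be the disjoint union $\bigsqcup_{i\in I} \sigma(\hat{A})_i$ of copies of $\sigma(\hat{A})$ with $\Sigma_A$ the induced $\sigma$-algebra, and set $\mu_A|_{\sigma(\hat{A})_i} := c_i\, \mu_i$. Then $\mu_A(M_A) = \sum_i c_i \|\psi_i\|^2 < \infty$. Define $F_A(x) := x$ on each copy of $\sigma(\hat{A})$; this is bounded because $\sigma(\hat{A})$ is compact (as $\hat{A}$ is bounded and self-adjoint). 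The orthogonal sum $\hat{U}_A := \oplus_i c_i^{-1/2}\, \hat{U}_i$ is a unitary $\Hi \to L_2(M_A,\mu_A)$ that conjugates $\hat{A}$ into multiplication by $F_A$, as required.

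The main obstacle is the cyclic-decomposition step of part (a): one must carefully use the functional calculus from the first spectral theorem to verify that $f \mapsto f(\hat{A})\psi_i$ is not only isometric but \emph{surjects} onto $\Hi_i$, which rests on the density of bounded Borel functions in $L_2(\sigma(\hat{A}),\mu_i)$ combined with the definition of $\Hi_i$ as a closure. Once that is in place, parts (b) and (c) reduce essentially to bookkeeping with measures and orthogonal sums.
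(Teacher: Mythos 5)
The paper itself gives no proof of this statement: it is quoted verbatim (as Th.~8.56 of the cited reference) and explicitly introduced with ``we can state (without proof) the two central theorems of the spectral theory for bounded operators,'' so there is no internal argument to compare yours against. Your proposal is the standard cyclic-subspace proof of the spectral representation theorem and is essentially correct: the Zorn decomposition into cyclic invariant subspaces, the identification of each $\Hi_i$ with $L_2(\sigma(\hat{A}),\mu_{\psi_i})$ via $f \mapsto f(\hat{A})\psi_i$, the support comparison for part (b), and the weighted disjoint-union construction for part (c) are all the right ingredients. Two small points deserve an explicit sentence rather than being left implicit: in the maximality step you need to observe that $\left(\oplus_i \Hi_i\right)^\perp$ is itself invariant under the whole functional calculus (because each $f(\hat{A})$ is normal with $f(\hat{A})^* = \bar{f}(\hat{A})$ preserving $\oplus_i\Hi_i$), so that the cyclic subspace generated by a vector in the complement really is orthogonal to the family; and the intertwining identity $\hat{U}_i f(\hat{A})|_{\Hi_i}\hat{U}_i^{-1} = f\cdot$ is first verified on the dense set of bounded Borel $g$ and then extended to all of $L_2(\sigma(\hat{A}),\mu_i)$ by continuity, using that both $f(\hat{A})$ and multiplication by the bounded function $f$ are bounded operators.
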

	Note that in $c)$ the measure is not uniquely determined by $\hat{A}$. This theorem is a more general version of the well known result about the splitting of an Hilbert space as direct sum of eigenspaces associated to a self adjoint operator.
	
	Let us conclude this section observing that the spectral decomposition theorem tells that any self-adjoint operator (i.e. a possible quantum observable) can always be seen as an integral operator and that this decomposition is unique. The spectral measure allows to compute the quantum expectation as an ordinary expectation and, finally, the spectral representation theorem tells that the whole algebraic structure described in section \ref{APdef} is present. This means that the complete probabilistic description of a \emph{single quantum observable} is possible by using measure-theoretic probability.
	
	\subsection{Ordinary probability in Hilbert spaces} \label{PinH}\label{sec1e}
	
	We concluded the previous section observing that for a single quantum observable we can use measure-theoretic probability without problems. In this section we want to see how we can do the opposite: describe a measure-theoretic random variable with operators over an Hilbert space. In section \ref{sec1b} we have seen that to any measure-theoretic probability space, $\PS$, we may associate an abelian $C^*$-algebra of functions, $L_\infty \PS$, which can always be represented by using multiplicative operators over the Hilbert space $L_2 \PS$, i.e. the commutative von Neumann algebra $\mathcal{V}_c(L_2\PS)$, as shown in theorem \ref{theo2}. Such a theorem also tells that expectations with respect to a probability measure are nothing but states over $\mathcal{V}_c(L_2\PS)$. We also observed that the Hilbert space $L_2 \PS$ strongly depends on the probability measure of the underlying probability space, and so a change of the probability measure would change the Hilbert space. However, the spectral representation theorem suggests that we may find a \textquotedblleft bigger Hilbert space" (namely $\Hi = \oplus_i \Hi_i$, as defined in the theorem) where this dependence on the probability measure seems to disappear. Finally, the spectral measure, introduced in section \ref{sec1d}, seems to allow us to move the probabilistic content from the original probability measure to (functional of) function of this \textquotedblleft bigger Hilbert space". In this section we want to study better this mechanism. More precisely, we want to discuss the following problem: how it is possible to construct explicitly an Hilbert space (independent on the probability measure), an operator and a \emph{state} (defined as in definition \ref{State-def}) on a suitable algebra of operators on $\Hi$, which are capable to give the same statistical prediction about a random variable described in ordinary measure-theoretic setting.
	
	Consider a probability space $\PS$, a measurable space $(M,\mathcal{M})$ and a random variable $X:\Omega \rightarrow M$ on it. As usual $X$ induces a distribution $\nu_X$ such that $(M,\mathcal{M}, \nu_X)$ is a probability space. Algebraically, the random variable $X$ can be seen as the map $x:L_\infty \PS \rightarrow L_{\infty}(M,\mathcal{M},\nu_X)$. Clearly the random variable $X$ can be seen also as the identity map on $L_{\infty}(M,\mathcal{M},\nu_X)$, and any expectation $\Ex_P$ can be computed using a suitable state $\phi_{\nu_X}$ over this algebra, i.e. $\Ex_{P}[f(X)] = \phi_{\nu_X}(f(X))$. This fact does not change if we represent the element $x \in L_{\infty}(M,\mathcal{M},\nu_X)$, corresponding to the original random variable $X$, as a multiplicative operator $\hat{M}_x$ acting on $L_2 (M,\mathcal{M},\nu_X)$, i.e. if we consider the abelian von Neumann algebra of operators $\A := \{\hat{M}_f | f \in L_2 (M,\mathcal{M},\nu_X)\}$ on this Hilbert space. Clearly $L_2 (M,\mathcal{M},\nu_X)$ changes as we change the initial probability measure $P$. Consider now the Hilbert space $\Hi = \oplus_i \Hi_i$ of the spectral representation theorem and a bounded operator $\hat{T} \in \boundo$ on it with spectrum $\sigma(\hat{T})$. Then take the surjective isometry of the theorem, i.e. $\hat{U}_i: \Hi_i  \rightarrow L_2(\sigma(\hat{T}),\mu_i)$. The idea is to use $\hat{U}_i$ to map $L_2(M,\mathcal{M}, \nu_X)$ in some $\Hi_i$ and to construct $\Hi$ from it. If we want to do that we can set:
	\begin{enumerate}
		\item[a)] $\sigma(\hat{T}) = M$,
		\item[b)] $\mu_i = \nu_X$.
	\end{enumerate}
	This allows to write that $\hat{U}_i: \Hi_i  \rightarrow L_2(M,\nu_X)$ (we omit the $\sigma$-algebra $\mathcal{M}$ for simplicity). 
	These requirements can be explained as follows. Since we want to represent with $\hat{T}$ the random variable $X$ (note that $\hat{T}$ is not the operator $\hat{M}_x$ seen before) and encode the probabilistic content of $(M,\mathcal{M}, \nu_X)$ (and so of $\PS$) in some suitable object defined on $\Hi$, the requirement $a)$ simply  means that the set of eigenvalues of the operator coincides with the set of outcomes of the random variable. This tells us how to construct the operator $\hat{T}$ since the spectrum uniquely identifies the operator. The requirement $b)$ is needed in order to encode the statistical information in functionals of elements of $\Hi$, allowing the Hilbert space, on which $\hat{T}$ is defined, to be capable to contain information about $P$. Note that at this level it is not clear what the meaning of the index $i$ is (which is important for the construction of $\Hi = \oplus_i \Hi_i$) in the original probability space. Observing that this index determines the dimension and separability property of the Hilbert space, let us try to attach it to some feature of the random variable we want to represent. In particular, \emph{ we assume that $i$ labels the outcome of $X$, i.e. $i = x \in M$}. This immediately implies that
	\begin{equation*}
		\Hi = \bigoplus_{x\in M} \Hi_x
	\end{equation*}
	where $\oplus$ means direct sum or direct integral according to the cardinality of $M$, while the operator representing $X$ is simply
	\begin{equation*}
		\hat{T} := \int_{M} x \hat{P}^{(\hat{T})}(dx)
	\end{equation*}
	where $\hat{P}^{(\hat{T})}(dx)$ is the PVM having $M$ as support. Note that in this way $\hat{T}\Hi_x \subset \Hi_x$, i.e. $\hat{P}^{(\hat{T})}(dx) | x \rangle = | x \rangle$ for any $| x \rangle \in \Hi_x$, as required by the spectral decomposition theorem. By construction the operator $\hat{T}$ has a non-degenerate spectrum and if $M$ is a bounded subset of $\Rea$, the spectrum of $\hat{T}$ is bounded, implying that $\hat{T}$ is a bounded operator. Let us assume this for the rest of this section. The only thing that we miss is how to represent the probability distribution $\nu_X$. At this point, we assume that the random variable $X$ is discrete hence $\nu(x)$ can be interpreted as the probability to have $X = x$. In general on $(M,\mathcal{M},\nu_X)$ we can describe, together with  $X$, all random variables $f(X)$, where $f:M \rightarrow M$ are measurable and bounded functions, and they correspond to the operators $f(\hat{T})$. Hence, since $\nu(x)$ is a bounded and measurable function, it can be represented as
	\begin{equation*}
		\hat{\rho}_{\nu}:= \nu(\hat{T}) = \int_{M} \nu(x) \hat{P}^{(\hat{T})}(dx).
	\end{equation*}
	Note that $\hat{\rho} \in \tco$, because $\nu(x)$ is a probability. In section \ref{sec1c} we have seen that a normal state $\phi(\cdot)$ on $\boundo$ can be always written as $\Tr{\hat{\rho} \mbox{ }\cdot\mbox{ }}$ for some trace class operator $\hat{\rho}$. The set of all the operators $f(\hat{T})$, equipped with the operation of sum and product of operators, forms a sub-algebra of $\boundo$ which is in one-to-one correspondence (via the surjective isometry $\hat{U}_x$) with an abelian von Neumann algebra. Thus this set of operators form an abelian von Neumann algebra, which we label by $\mathcal{V}_T$. Then if we impose that states on $\mathcal{V}_T$ coincide with states of $L_{\infty}(M,\mathcal{M},\nu_X)$ (inheriting all their properties), we must have
	\begin{equation}\label{QuantumExp}
		\Ex_{\nu_X}[f(X)] = \Tr{\hat{\rho} f(\hat{T})}
	\end{equation}
	for any $f$ measurable and bounded function on $M$. This implies that $\hat{\rho} = \hat{\rho}_\nu$. Note that, this time  given $\nu_X$ (i.e. $P$) we can determine a unique object which encodes all the probabilistic information of the random phenomenon under study.\newline
	
	Heuristically, it seems that we can write the following formal \textquotedblleft correspondence"
	\begin{equation*}
	\begin{split}
	P(d\omega) &\leftrightarrow \hat{\rho}_P \\
	\int \cdots  &\leftrightarrow \Tr{\cdots}
	\end{split}
	\end{equation*} 
	which anyhow should be taken with care. First, additional difficulties are added if one drops the assumption that $M$ is a bounded subset of $\Rea$. Another difficulty arises if we want to describe continuous random variable taking value on $\Rea$. These difficulties may be overcome, from a practical point of view, by seeing continuous unbounded operators as the limit of bounded operators with discrete spectrum: this is the solution that we will adopt in \cite{LC2} and \cite{LC3} to deal with continuous unbounded random variables. Rigorous approaches to treat algebraically unbounded operators are available \cite{kadison2015fundamentals} while the notion of (generalized) eigenvalues for continuous unbounded operators can be formalized, from the mathematical point of view, using the \emph{Gel'fand triples} \cite{bohm1989dirac}. Despite seems to be an overcomplication, this change of language for the description of a random phenomenon gives rise to new possibilities, as it will be explained in the next sections.\newline
	
	\paragraph{Example: The die}
	Let us continue the previous example of the classical die. Consider the first description we gave in the previous example, i.e. we used the probability space $(\Omega,\E,P)$ with $\Omega = \{1,\cdots, 6\}$. If $\hat{D}$ is the operator associated to the random variable $D: \Omega \rightarrow \Nat$, we know that
	\begin{equation*}
		\sigma(\hat{D}) = \Omega = \{1,2,3,4,5,6\}.
	\end{equation*}
	The Hilbert space on which this operator act is $\Hi = \oplus_{i \in \Omega} \Hi_i$. It has dimension $6$ and in general it can be seen as a subspace of $\Comp^6$. A generic random variable $X(\omega) = f(\omega)$ over $(\Omega,\E,P)$ corresponds to the operator
	\begin{equation*}
		f(\hat{D}) = \sum_{i=1}^{6} f(i) |i\rangle\langle i |.
	\end{equation*} 
	The probability measure, can be represented as
	\begin{equation*}
		\hat{\rho}_P = \sum_{i=1}^{6} p_i |i\rangle\langle i|,
	\end{equation*}
	where $|i\rangle\langle i|$ is the projector on $\Hi_i$. Thus, any expectation value can be computed as
	\begin{equation*}
		\Ex_P[X] = \Tr{ f(\hat{D})\hat{\rho}_P}.
	\end{equation*}
	We again stress that we are using just one basis of $\Comp^6$, so \emph{operators written in different basis do not correspond to any random variable which can be defined on the original probability space} and for this reason they must be excluded (they do not belong to the same abelian algebra of $\hat{D}$).
	
    \section{Algebraic probability spaces}\label{section2}
    
    In part \ref{SECT1}, we have seen that the usual measure-theoretic formulation of probability theory can be encoded in a satisfactory way in an abelian von Neumann algebra of functions. This suggests that a more general formulation of probability theory is possible in an algebraic context, allowing to obtain a non-commutative probability theory. Here we will present the basic facts about algebraic probability theory, emphasizing the role of commutativity and its influence on the possible concrete representations of such algebraic spaces. Additional references for that section are \cite{redei2007quantum,accardiprobabilita,voiculescu1992free}.
    
    \subsection{Basic definitions}
    
    Some of the basic definitions we need in order to describe the algebraic approach to probability have been already introduced in section \ref{APdef}. For notions like algebra, involution, state (and its classification) and representation, we will refer to this section. 
    \begin{definition}
    	The pair $(\A,\omega)$ where $\A$ is a $^*$-algebra with unit, and $\omega:\A \rightarrow \Comp$ a state on it, is called \emph{algebraic probability space}.
    \end{definition}
    We will restrict our attention to the case where $\A$ is  also $C^*$ quickly. Note that the commutativity of $\A$ is not required in the definition. In section \ref{sec1c} we saw that in the abelian case, if $a\ \in \A$, $\omega(a) \in \Comp$ is its expectation value. If $a = a^*$ (i.e. $a$ is self-adjoint), then one can prove that $\omega(a) \in \Rea$, thus self-adjoint elements of the algebra correspond to real-valued random variables. More generally, the elements of a generic algebra can be interpreted as random variable, as the following definition suggests.
    \begin{definition}
    	Given an algebraic probability space $(\A,\omega)$ and another $^*$-algebra with unit $\B$, then an homomorphism $j:\B\rightarrow\A$ preserving the unit and the involution, is called an \emph{algebraic random variable}.
    \end{definition}
    This definition is just an extension of the notion of random variable, used in the abelian case, to general algebras. As in ordinary probability theory, the algebraic random variable $j$, induces a state, $\omega_j= \omega \circ j$  called \emph{distribution}, such that $(\B,\omega_j)$ is another algebraic probability space.
    
    \subsection{Representations of an algebra}
    
    Algebras are very abstract objects. For this reason, the notion of representation is very important. Here we will review the two basic representation theorems that we have at disposal, in order to pass from an abstract algebra to some concrete algebra.
    
    A general result which allows to represent a generic abstract $C^*$-algebra with a concrete $C^*$-algebra of operators is the celebrated GNS theorem. First we need to introduce some terminology: a representation is called a \emph{$^*$-representation} if it preserves the involution, while a vector $\psi \in \Hi$ is said to be \emph{cyclic} for a representation $\pi$, if $\mbox{span} \{\pi(a)\psi | a \in \A\}$ is a dense subspace of $\Hi$.
    \begin{theorem}
    	Let $\A$ be a $C^*$-algebra with unit and $\omega:\A\rightarrow\Comp$ a state. Then:
    	\begin{enumerate}
    		\item[i)] there exists a triple $(\Ho,\pi_\omega,\Psi_\omega)$ where $\Ho$ is an Hilbert space, $\pi_\omega : \A \rightarrow 
    		\boundoo$ is a $^*$-representation  of $\A$ on the $C^*$-algebra of bounded operators on $\Ho$, and 
    		$\Psi_\omega \in \Ho$ is a vector, such that:
    		\begin{enumerate}
    			\item[a)] $\Psi_\omega$ is a unit vector, cyclic for $\pi_\omega$;
    			\item[b)] $\langle \Psi_\omega | \pi_\omega(a) \Psi_\omega \rangle = \omega(a)$ for any $a \in \A$.
    		\end{enumerate}
    		\item[ii)] If $(\mathcal{H},\pi,\Psi)$ is a triple such that:
    		\begin{enumerate}
    			\item[a)] $\mathcal{H}$ is an Hilbert space, $\pi: \A \rightarrow \boundo$ is a $^*$-representation and
    			$\Psi \in \mathcal{H}$ is a unit vector cyclic for $\pi$;
    			\item[b)] $\omega(a) = \langle \Psi | \pi(a) \Psi \rangle$;
    		\end{enumerate}
    		then there exits a unitary operator $\hat{U}:\mathcal{H}\rightarrow\Ho$ such that $\Psi = \hat{U}\Psi_\omega$ and 
    		$\pi_\omega(a) = \hat{U}\pi(a)\hat{U}^{-1}$ for any $a \in \A$.
    		
    	\end{enumerate}
    \end{theorem}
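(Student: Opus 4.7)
The plan is the classical GNS construction. The essential idea is that the state $\omega$ itself furnishes a pre-inner product on $\A$, and by quotienting out its null space and completing, we build the Hilbert space; the representation is then just left multiplication.

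For part (i), I would first define the sesquilinear form $\langle a, b\rangle_\omega := \omega(a^* b)$ on $\A$. Positivity of $\omega$ gives positive semi-definiteness, and the Cauchy--Schwarz inequality $|\omega(a^* b)|^2 \leqslant \omega(a^* a)\,\omega(b^* b)$ follows from standard manipulations. Define the Gelfand ideal
\begin{equation*}
\mathcal{N}_\omega := \{a \in \A \mid \omega(a^* a) = 0\},
\end{equation*}
and use Cauchy--Schwarz to show that $\mathcal{N}_\omega$ is a left ideal in $\A$. The quotient $\A/\mathcal{N}_\omega$ then carries a genuine inner product $\langle [a],[b]\rangle := \omega(a^* b)$; its completion is the Hilbert space $\Ho$. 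Define $\pi_\omega(a)[b] := [ab]$, which is well defined precisely because $\mathcal{N}_\omega$ is a left ideal, and set $\Psi_\omega := [\unit]$. Cyclicity is immediate since $\pi_\omega(a)\Psi_\omega = [a]$ and the classes $[a]$ are dense in $\Ho$ by construction. The state-reproduction property $\langle \Psi_\omega \mid \pi_\omega(a)\Psi_\omega\rangle = \omega(\unit^* a \unit) = \omega(a)$ holds by definition, and normalisation $\|\Psi_\omega\|^2 = \omega(\unit) = 1$ comes from the state axiom.

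The step I expect to be the main technical obstacle is showing that each $\pi_\omega(a)$ extends to a bounded operator on all of $\Ho$ (so that it lives in $\boundoo$). This is exactly where the $C^*$-algebra hypothesis is used: for any $b \in \A$ one has
\begin{equation*}
\|\pi_\omega(a)[b]\|^2 = \omega(b^* a^* a b) \leqslant \|a^* a\|\,\omega(b^* b) = \|a\|^2 \,\|[b]\|^2,
\end{equation*}
where the inequality follows from the fact that $\|a^* a\|\unit - a^* a$ is a positive element of $\A$ (so $\omega(b^*(\|a^*a\|\unit - a^*a)b) \geqslant 0$) together with the $C^*$-identity $\|a^*a\| = \|a\|^2$. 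This gives $\|\pi_\omega(a)\| \leqslant \|a\|$, so the operator extends by density to all of $\Ho$. That $\pi_\omega$ preserves products, the unit, and the involution is a direct verification on the dense subspace of classes $[b]$.

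For part (ii), given an alternative triple $(\mathcal{H},\pi,\Psi)$ with the same properties, I would define $\hat{U}$ on the dense subspace $\{\pi(a)\Psi \mid a \in \A\} \subset \mathcal{H}$ by $\hat{U}\pi(a)\Psi := \pi_\omega(a)\Psi_\omega$. This is well defined and isometric because both sides have the same squared norm:
\begin{equation*}
\langle \pi(a)\Psi \mid \pi(a)\Psi\rangle = \omega(a^* a) = \langle \pi_\omega(a)\Psi_\omega \mid \pi_\omega(a)\Psi_\omega\rangle.
\end{equation*}
Extending by continuity yields an isometry $\hat U:\mathcal{H}\to\Ho$, which is surjective because its image contains the dense set $\{\pi_\omega(a)\Psi_\omega\}$, hence unitary. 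Taking $a = \unit$ gives $\hat{U}\Psi = \Psi_\omega$, and the intertwining relation $\pi_\omega(a) = \hat U \pi(a) \hat U^{-1}$ is verified by evaluating both sides on vectors of the form $\pi(b)\Psi$ and extending by density.
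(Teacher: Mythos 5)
Your proposal is the standard GNS construction (null-space quotient, completion, left multiplication, boundedness via positivity of $\|a^*a\|\unit - a^*a$ plus the $C^*$-identity, and uniqueness by an isometry defined on the dense cyclic subspace), and it is correct; the paper itself offers no proof of this theorem, citing it as the celebrated GNS result, so there is nothing to compare against beyond noting that your argument is the textbook one. One small remark: the paper's statement of part (ii) writes $\Psi = \hat{U}\Psi_\omega$ with $\hat{U}:\mathcal{H}\rightarrow\Ho$, which does not typecheck; your version $\hat{U}\Psi = \Psi_\omega$ is the consistent (and standard) formulation.
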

    Note that in general, $\Hi_\omega \neq \Hi_{\omega'}$ for $\omega \neq \omega'$. If $\Hi_\omega$ is finite dimensional, then $\A$ is also a von Neumann algebra; in the infinite dimensional case, this is not true anymore. Because of this theorem, we will always use algebras of operators over some Hilbert space instead of abstract objects.
    For completeness, we mention that a GNS theorem for $^*$-algebras with unit is also available (see Th. 14.20 in \cite{moretti2013spectral} ). The contents of such a theorem are more or less the same of the GNS theorem presented here. However, it allows to represent elements of a $^*$-algebra with unbounded operators which are closable over a state dependent domain $\mathcal{D}_{\omega}$. This version of the GNS theorem allows to threat in a more rigorous way unbounded random variables using unbounded operators over some Hilbert space, as mentioned at the end of section \ref{sec1e}. If $\A$ is commutative, we have another result which allows to represent abstract $C^*$-algebras with continuous functions over some space: the commutative Gel'fand-Naimark theorem. 
    \begin{theorem}
    	Any commutative $C^*$-algebra with unit $\A$ is $^*$-isomorphic (i.e. the involution is preserved under the isomorphism) to the commutative $C^*$-algebra with unit of continuous functions on $\Delta(\A)$, $C(\Delta(\A))$ (which is $C^*$ with respect to the norm $\|\cdot\|_\infty$), where
    	\begin{equation*}
    		\begin{split}
    			\Delta(\A):= \{\phi: \A\rightarrow\Comp \mbox{ }|&\mbox{ } \phi(ab) = \phi(a)\phi(b) \mbox{ }\forall a,b \in A, \\
    			& \mbox{ } \phi \mbox{ non trivial}\}.
    		\end{split}
    	\end{equation*} Such $^*$-isomorphism (called Gelfand's transform) is isometric.
    \end{theorem}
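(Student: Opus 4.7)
The plan is to build the $^*$-isomorphism explicitly as the Gelfand transform $\Gamma:\A \to C(\Delta(\A))$ defined by $\Gamma(a)(\phi) := \phi(a)$, and then verify step by step that $\Gamma$ has all the required properties: continuity of the image functions, algebra homomorphism, involution preservation, isometry, and surjectivity. The topological preparation is the first essential move: equip $\Delta(\A)$ with the relative weak-$^*$ topology inherited from the dual $\A^*$. One shows that characters automatically have norm $\leqslant 1$ (since $\phi(a)$ lies in the spectrum of $a$, and the spectral radius is bounded by $\|a\|$), so $\Delta(\A)$ sits inside the closed unit ball of $\A^*$. Combining the Banach-Alaoglu theorem with the fact that the multiplicativity condition $\phi(ab)=\phi(a)\phi(b)$ is preserved under weak-$^*$ limits, one concludes that $\Delta(\A) \cup \{0\}$ is weak-$^*$ compact; removing the zero functional and using that $\phi(\unit)=1$ for any character, $\Delta(\A)$ becomes a compact Hausdorff space. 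In particular $C(\Delta(\A))$ is a well-defined commutative $C^*$-algebra with unit.

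Next I would check the elementary algebraic properties of $\Gamma$. Linearity and multiplicativity are immediate from the corresponding properties of each $\phi \in \Delta(\A)$, and $\Gamma(\unit)$ is the constant function $1$. The continuity of $\Gamma(a)$ as a function on $\Delta(\A)$ is literally the definition of the weak-$^*$ topology. For the preservation of the involution, the key lemma is that if $a=a^*$ in a commutative $C^*$-algebra then $\phi(a)\in\Rea$ for every character $\phi$; this follows from the $C^*$ identity applied to $a \pm it\unit$ and a standard estimate showing that the imaginary part of $\phi(a)$ must vanish. Decomposing a general element as $a = \tfrac{1}{2}(a+a^*) + i \cdot \tfrac{1}{2i}(a-a^*)$ and using linearity, one gets $\phi(a^*) = \overline{\phi(a)}$, i.e. $\Gamma(a^*) = \overline{\Gamma(a)}$.

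The main obstacle, and the conceptual heart of the theorem, is the isometry statement $\|\Gamma(a)\|_\infty = \|a\|$. This rests on the spectral radius formula $r(a) = \lim_n \|a^n\|^{1/n}$, which in a $C^*$-algebra combined with the identity $\|a^*a\| = \|a\|^2$ yields $r(a^*a) = \|a^*a\| = \|a\|^2$ for any $a$. One then identifies $r(a^*a)$ with $\sup_{\phi \in \Delta(\A)} |\phi(a^*a)|$, which in a commutative unital $C^*$-algebra equals $\sup_\phi |\phi(a)|^2 = \|\Gamma(a)\|_\infty^2$; this identification uses that every character is continuous and that the spectrum of $a^*a$ coincides with the set of values $\phi(a^*a)$ as $\phi$ ranges over $\Delta(\A)$, which requires showing every maximal ideal is the kernel of some character (a Gelfand-Mazur style argument exploiting commutativity and completeness). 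From this identity one simultaneously gets that $\Gamma$ is isometric and therefore injective.

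Finally, surjectivity follows from the Stone-Weierstrass theorem. The image $\Gamma(\A)$ is a $^*$-subalgebra of $C(\Delta(\A))$ that contains the constants and separates points of $\Delta(\A)$: two distinct characters $\phi_1 \neq \phi_2$ must differ on some $a \in \A$, and then $\Gamma(a)(\phi_1) \neq \Gamma(a)(\phi_2)$. By Stone-Weierstrass, $\Gamma(\A)$ is dense in $C(\Delta(\A))$; by the isometry just established, $\Gamma(\A)$ is also complete and hence closed. The two together give $\Gamma(\A)=C(\Delta(\A))$, completing the proof that $\Gamma$ is an isometric $^*$-isomorphism onto the continuous functions on $\Delta(\A)$.
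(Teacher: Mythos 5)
The paper quotes this as a classical result (the commutative Gel'fand--Naimark theorem) and supplies no proof of its own, so there is nothing in-paper to compare against; your argument is the standard textbook proof and it is correct. All the delicate points are accounted for: weak-$^*$ compactness of $\Delta(\A)$ via Banach--Alaoglu together with $\phi(\unit)=1$, reality of characters on self-adjoint elements via the $C^*$-identity applied to $a\pm it\unit$, the identification of $\sigma(a)$ with the range of the characters through the Gel'fand--Mazur argument on maximal ideals, the spectral radius formula giving the isometry, and Stone--Weierstrass plus closedness of the isometric image giving surjectivity.
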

    When $\A$ is an abelian von Neumann algebra (hence also $C^*$), a similar result holds by using an algebra of measurable functions over some space (this is exactly the content of theorem \ref{th111}). Note that the GNS theorem holds also for the commutative case but only in the abelian case we can construct the measure-theoretic probability space. This fact has important consequences on the concrete interpretation of algebraic probability spaces.
    
    \subsection{Some effect of non-commutativity}
    
    Let us discuss some differences between the commutative and the non-commutative case, which are relevant for quantum theory, but the list of differences does not end here. 
    \begin{enumerate}
    	\item[i)] \emph{The lattice of projectors.} Given a $^*$-algebra $\A$, we call $p \in \A$ orthogonal projector if $p = p^* = p^2$ and the set of projectors on $\A$ will be labeled by $\mathcal{P}(\A)$. From the abelian case, we have seen that the $\sigma$-algebra of the associated probability space can be constructed from this structure ($\mathcal{P}(\A) = \tilde{\E}$, in section \ref{sec1b}). From the mathematical logic point of view, this means that in the abelian case $\mathcal{P}(\A)$ has the structure of a distributive lattice (i.e. a Boolean lattice which is always isomorphic to a Boolean $\sigma$-algebra). In the non-commutative case, this structure changes: $\mathcal{P}(\A)$ has, in general, the structure of an orthomodular lattice (modularity depends on the type of factor of $\A$). The practical consequence is that we cannot interpret the propositions about \textquotedblleft non-commutative random phenomena" using ordinary propositional calculus (the logical connectivities AND and  OR are problematic) which is exactly what happens in quantum logic. 
    	\item[ii)] \emph{The CHSH inequality.} Consider two von Neumann algebras $\A$ and $\B$ (which are automatically $C^*$) that are mutually commuting and $\A,\B \subset \boundo$. Let $\omega$ be a normal state for both algebras (hence a positive normalized linear functional from $\boundo$ to $\Comp$) and define
    	\begin{equation*}
    		\beta(\omega,\A,\B):= \sup\omega \left( a_1[b_1+b_2] + a_2[b_1 - b_2] \right)
    	\end{equation*}
    	where the $\sup$ is taken over all $a_1,a_2 \in \A$ and $b_1,b_2 \in \B$ having norm less than 1. Then if at least one of these two algebras is abelian one can prove that $\beta(\omega,\A,\B) \leqslant 2$ for all states $\omega$. When both $\A$ and $\B$ are non-abelian, then this bound can be violated: it is known that the maximal violation is $\beta(\omega,\A,\B) = 2\sqrt{2}$ \cite{cirel1980quantum}. The degree of violation depends on the type of algebra: for two mutually commuting, non-abelian von Neumann algebra, if the Schlieder property holds \cite{redei2007quantum} (i.e. $ab=0$ for $a \in \A$ and $b \in \B$ implies either $a=0$ or $b=0$) then there exists a normal state which maximally violate the inequality. This is nothing but the well known CHSH inequality of quantum mechanics \cite{clauser1969proposed}.
    	\item[iii)] \emph{Dispersion free state.} Let $\A$ be a von Neumann algebra, we say that a state $\omega$ is \emph{dispersion-free} if $\omega((a-\omega(a))^2) = 0$ for all $a \in \A$. In the abelian case, a pure state can be characterised as the states for which $\omega(ab) = \omega(a)\omega(b)$ holds for any $a,b \in \A$. Cleary, the pure states in abelian case are dispersion-free. In the non-abelian case dispersion-free states do not exist. In quantum mechanics this is a well known fact, and it is called Heisenberg uncertainty principle. We will use this fact in the next section to study a possible characterisation of non-commutativity.
    \end{enumerate}
    Other differences which are relevant from the physical point of view, between commutative and non-commutative case are, for example, the way one composes two algebras, or the algebraic generalisation of the notions of independence and conditional expectation (see \cite{redei2007quantum} and references therein for a detailed discussion). 
	
	\section{Entropic uncertainty relations}\label{section4}
	
	The non-existence of dispersion-free states in a non-commutative probability space suggests that we cannot have delta-like marginals (of some joint probability distribution) for all the random variables of our algebra. Following this intuitive idea, we introduce a natural measure of the \textquotedblleft spread" of a given probability distribution and then we discuss how this measure behaves in presence of non-commuting random variables.
	
	\subsection{Entropy in information theory}
	
	A natural measure we can use to quantify the spread of a given probability distribution is the Shannon entropy. Such entropy is the basic notion of classical information theory and for this reason it is sometimes claimed (especially in quantum physics \cite{brukner2001conceptual}) that it cannot be used for non-commutative probability spaces. From the mathematical point of view, this claim is not true, simply because any non-commutative algebra always admits a commutative sub-algebra where classical information theory can be applied. In addition, the Shannon entropy is not sensitive to the origin of probability \cite{timpson2003supposed}: it is associated to a \emph{single} random variable. 
	
	We will introduce the Shannon entropy as done in \cite{nielsen2000quantum}, which is different to Shannon's original approach. Na\"{i}vely speaking, information quantifies a number of things we do not know about a given random phenomenon. In other words, information quantifies the unexpectedness of an event $E$ relative to a random variable $X$. Let $I_X(E)$ be a measure of this unexpectedness; it is reasonable to require that
	\begin{enumerate}
		\item[i)] $I_X(E)$ is a function of the probability of $E$ to occur, and not directly a function of the event $E$;
		\item[ii)] $I_X(E)$ is a smooth function of the probability;
		\item[iii)] if $E$ and  $F$ are two disjoint events (hence independent), then $I_X(E,F) = I_X(E) + I_X(F)$.
	\end{enumerate}
	It is not difficult to see that $I_X(E) = k\log_b(P(E))$ fulfils the three requirements. Typically  $k =-1$ and $b = e$ are chosen, and this function is called \emph{information content} of the event $E$. The Shannon entropy can be thought as the expectation value of the information content of the elementary events ($E = \{\omega\}$), i.e. $H(X) := \Ex[I_X]$. Consider a discrete random variable taking values over a discrete set $\{x_1,\cdots, x_N\}$, then $H(X)$ is just
	\begin{equation}\label{SE}
		H(X) = - \sum_{i=1}^N p_i \log p_i
	\end{equation}
	where $p_i := P[X= x_i]$ where $x_i$ is one of the possible outcomes of the random variable $X$. Note that $H(X)$ remains well defined even for $N = \infty$, as one can prove by induction. To better understand how $H(X)$ quantifies the spread of a distribution, let us consider the case of a certain event (determinism). Suppose we know that the event $E:= \{X = k\}$ is always true. Then clearly $p_i = \delta_{ik}$, which gives $H(X) = 0$: the event is certain so our unexpectedness is zero (note we assumed $0\log0 =0$, as typically done in information theory). Since $-x\log x$ is always positive for $x \in [0,1]$ it is not difficult to understand that $H(X) =0$ only for delta-like distributions. In addition $iii)$ suggests that the more elementary events contribute to $H(X)$, namely the more elementary events have non zero probability, the lager its value will be. In this sense we can use $H(X)$ to quantify the spread of a probability distribution.
	
	As already observed at the beginning of this section, the only requirement needed on $\{p_i\}$ in order to define $H(X)$ is that they come from a $\sigma$-additive, normalised measure, which happens in any algebraic probability space ($\sigma$-additive means that the measure remains finite even for countable unions of events). In the non-commutative case some usual properties of $H$ do not hold: as a rule of thumb, all properties which depend on vectors of random variables (like $(X,Y)$) should be checked with care.
	
	\subsection{What is an entropic uncertainty relation?}\label{mimik}
	
	Entropic uncertainty relations are a way to introduce an uncertainty principle for generic observables in quantum mechanics. Here we will review the known bounds which are interesting for our discussion. The results presented here can be found in \cite{bialynicki1975uncertainty}, \cite{bialynicki2006formulation}, \cite{deutsch1983uncertainty}, \cite{partovi1983entropic}, \cite{maassen1988generalized} and \cite{krishna2002entropic}.
	
	Entropic uncertainty relations are relevant relations between self-adjoint operators in an Hilbert space. Let us start with a \textquotedblleft preliminary definition".
	\begin{definition}
		Consider a Hilbert space $\Hi$ and two self-adjoint operators on it, $\hat{A}$ and $\hat{B}$. Then if
		\begin{equation*}
			H_{\hat{\rho}}(\hat{A}) + H_{\hat{\rho}}(\hat{B}) \geqslant C \mspace{50mu} \forall \hat{\rho} \in \tco,
		\end{equation*}
		where $C$ is a fixed positive number independent on $\hat{\rho}$, we say that $\hat{A}$ and $\hat{B}$ fulfil an \emph{entropic uncertainty relation}.
	\end{definition}
	In the definition above, $H_{\rho}(\hat{A})$ is the Shannon entropy computed with the probability distribution $\mu_{\hat{\rho}}(\cdot) = \Tr{ \hat{\rho}\hat{P}^{(\hat{A})}(\cdot)}$, where $\hat{P}^{(\hat{A})}(\cdot)$ is the PVM associated to $\hat{A}$. The same holds for $\hat{B}$. We can clearly see why this definition should be taken with care: according to definition \ref{SE}, what is the Shannon entropy if the spectrum of the operator is continuous? We will provide a more rigorous definition in the next section, for the moment we just observe that if the Hilbert space is finite dimensional this definition works (because all operators are compact). Typically, in quantum information, one is  interested in finding the \emph{bound} $C$ (i.e. in the minimisation problem $\min_{\hat{\rho} \in \tco} [ H_{\hat{\rho}}(\hat{A}) + H_{\hat{\rho}}(\hat{B}) ]$).
	
	As a first example of the aforementioned bound (i.e. of entropic uncertainty relation), let us consider the following theorem \cite{maassen1988generalized}.
	\begin{theorem}\label{ThEUR1}
		Let $H_{\hat{\rho}}(\hat{A})$ and $H_{\hat{\rho}}(\hat{B})$ be the Shannon entropies associated to two non-degenerate self-adjoint operators $\hat{A}$ and $\hat{B}$ over a finite dimensional Hilbert space $\Hi$. Assume that $\{|\phi_a\rangle\}_{a \in \sigma(\hat{A})}$ and $\{|\psi_b\rangle\}_{b \in \sigma(\hat{B})}$ are basis of eigenvectors of $\hat{A}$ and $\hat{B}$ respectively. Then $\forall \hat{\rho} \in \tco$
		\begin{equation*}
			H_{\hat{\rho}}(\hat{A}) + H_{\hat{\rho}}(\hat{B}) \geqslant -2\log (\max_{a,b} |\langle \phi_a| \psi_b \rangle|).
		\end{equation*}
	\end{theorem}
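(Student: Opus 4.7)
My plan is to follow the classical Maassen--Uffink route, which passes through the Riesz--Thorin interpolation theorem applied to the change-of-basis unitary between the eigenbases of $\hat{A}$ and $\hat{B}$. Since $\Hi$ is finite-dimensional, every state is in $\tco$ and the Shannon entropies are just finite sums, so no subtleties of definition arise.

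First I would reduce to pure states. Writing $p_a(\hat\rho) = \Tr{\hat\rho \ps{\phi_a}{\phi_a}}$ and $q_b(\hat\rho) = \Tr{\hat\rho \ps{\psi_b}{\psi_b}}$, both distributions are affine functions of $\hat\rho$, and the Shannon entropy is a concave function of a probability vector. Hence $\hat\rho \mapsto H_{\hat\rho}(\hat A) + H_{\hat\rho}(\hat B)$ is concave on the (convex, compact) set of density operators, so its minimum is attained at an extreme point, i.e.\ at a pure state $\hat\rho = \ps{\Psi}{\Psi}$. It therefore suffices to prove the bound under this assumption.

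Next I introduce the change-of-basis matrix $U_{ab} := \langle \phi_a | \psi_b\rangle$, which is unitary since both $\{|\phi_a\rangle\}$ and $\{|\psi_b\rangle\}$ are orthonormal. Expanding $|\Psi\rangle = \sum_b c_b |\psi_b\rangle$ gives $q_b = |c_b|^2$ and $p_a = |(Uc)_a|^2$. Regarded as an operator on $\ell^r$-spaces of dimension $\dim\Hi$, $U$ satisfies $\|U\|_{2\to 2} = 1$ (by unitarity) and $\|U\|_{1\to\infty} = \max_{a,b}|U_{ab}| =: c_{\max}$ (the norm of a matrix as a map from $\ell^1$ to $\ell^\infty$ is the largest absolute entry). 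The Riesz--Thorin interpolation theorem then yields
\begin{equation*}
\|Uc\|_{q_\theta} \;\leq\; c_{\max}^{\theta}\, \|c\|_{p_\theta}, \qquad \theta \in [0,1],
\end{equation*}
where $p_\theta = 2/(1+\theta)$ and $q_\theta = 2/(1-\theta)$ are Hölder conjugates.

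Finally I would extract the Shannon bound from this family of norm inequalities by differentiating at $\theta=0$. Define $F(\theta) := \log\|Uc\|_{q_\theta} - \log\|c\|_{p_\theta} - \theta\log c_{\max}$. Because $\|Uc\|_2 = \|c\|_2 = 1$, one has $F(0)=0$ and $F(\theta)\leq 0$ for $\theta\in[0,1]$, hence $F'(0^+)\leq 0$. A direct computation of $\frac{d}{d\theta}\bigl(\tfrac{1}{q_\theta}\log\sum_a p_a^{q_\theta/2}\bigr)$ at $\theta=0$ (and similarly for the $c_b$ term) gives $F'(0) = -\tfrac{1}{2}H(\hat A)_{\hat\rho} - \tfrac{1}{2}H(\hat B)_{\hat\rho} - \log c_{\max}$, whence $H_{\hat\rho}(\hat A) + H_{\hat\rho}(\hat B)\geq -2\log c_{\max}$ as desired. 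Equivalently, one may rewrite the Riesz--Thorin inequality as a Rényi-entropy uncertainty relation $H_{\alpha_A}(P)+H_{\alpha_B}(Q)\geq -2\log c_{\max}$ with $\alpha_A=1/(1-\theta)$, $\alpha_B=1/(1+\theta)$, and let $\theta\to 0^+$; by continuity of the Rényi entropy at $\alpha=1$ this again yields the Shannon bound.

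The only step requiring real care is the Riesz--Thorin interpolation itself (which I would cite rather than reprove) and the bookkeeping that translates the operator-norm inequality into the entropic one; the concavity reduction and the derivative computation are routine once the interpolation step is in place.
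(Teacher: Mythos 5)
Your argument is correct and is precisely the route the paper itself points to: the paper does not reprove this theorem but cites Maassen--Uffink and remarks that it "is a consequence of the Riesz--Thorin interpolation theorem," which is exactly the interpolation-plus-differentiation-at-$\theta=0$ argument you carry out (your concavity reduction to pure states, the identification $\|U\|_{1\to\infty}=\max_{a,b}|\langle\phi_a|\psi_b\rangle|$, and the derivative computation are all sound). No gap to report.
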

	We can see that, if the scalar product between eigenvectors is less than 1 (i.e. $\hat{A}$ and $\hat{B}$ cannot be diagonalised at the same time, $[\hat{A},\hat{B}] \neq 0$) the bound is non-zero. This result can be generalized to the case of POVMs (which are defined as PVMs except that $ii)$ in definition \ref{PVMdef} is not required to hold), which encodes, as special case, that of degenerate operators \cite{krishna2002entropic}.
	Because we do not need all this generality, we consider the PVM case only, which can be obtained from theorem \ref{ThEUR1} by replacing the argument of the logarithm with $\max_{ab}( \| \hat{P}^{(\hat{A})}_{a}\hat{P}^{(\hat{B})}_{b}\|)$ (here $\hat{P}^{(\hat{A})}_{a}$ is the projector on the eigenspace associate to $a \in \sigma(\hat{A})$; same for $\hat{P}^{(\hat{B})}_{b}$). It is worth to say that theorem \ref{ThEUR1}, and its generalisations, is a consequence of the Riesz-Thorin interpolation theorem \cite{folland2013real} for $L_p$-spaces. From the physical point of view  this means that no physical assumption is needed to derive this theorem: in this sense it does not depend on the physical interpretation of the mathematical objects. Also in infinte dimensional Hilbert spaces we have a similar theorem. Nevertheless this time we need to face the problem that operators do not admit in general only a point spectrum. To include also the continuos-spectrum case, avoiding to introduce the \textquotedblleft continuous version" of the Shannon entropy (i.e. the differential entropy \cite{ihara1993information}, which is not properly a generalisation) the idea is simply to partition the spectrum. Given a generic operator $\hat{A}$ on an infinite dimensional Hilbert space $\Hi$, a \emph{partition of the spectrum} is a collection of set $\{E_i\}_{i \in I}$ such that $\sigma(\hat{A}) = \cup_{i \in I} E_i$. Given this partition of the spectrum and $\hat{\rho} \in \tco$, we can associate to it a set of probabilities $\{p^{(\hat{\rho})}_i\}_{i \in I}$ computed via the formula $p^{(\hat{\rho})}_i = \Tr{\hat{\rho}\hat{P}^{(\hat{A})}(E_i) }$. Using this distribution we can compute $H_{\hat{\rho}}(\hat{A})$. Then we have the following theorem \cite{partovi1983entropic}.
	\begin{theorem}\label{ThEUR2}
		Let $H_{\hat{\rho}}(\hat{A})$ and $H_{\hat{\rho}}(\hat{B})$ be the Shannon entropies associated to self-adjoint operators $\hat{A}$ and $\hat{B}$ over (possibly infinite dimensional) Hilbert space $\Hi$. Assume that $\{E_i\}_{i \in I}$ and $\{F_j\}_{j \in J}$ are two different partitions of the spectrum of $\hat{A}$ and $\hat{B}$, respectively. Then $\forall \hat{\rho} \in \tco$
		\begin{equation*}
			H_{\hat{\rho}}(\hat{A}) + H_{\hat{\rho}}(\hat{B}) \geqslant 2\log \left(\frac{2}{\sup_{i,j} \| \hat{P}^{(\hat{A})}(E_i) + \hat{P}^{(\hat{B})}(F_j) \|}\right).
		\end{equation*}
	\end{theorem}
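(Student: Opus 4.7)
The plan is to discretise the problem via the partitions, reducing it to a classical information-theoretic statement, and then to close the argument with positivity of $\hat\rho$ and the arithmetic-geometric mean inequality.

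First, I would observe that property $iv)$ of Definition \ref{PVMdef} applied to the disjoint partitions $\sigma(\hat A) = \bigcup_{i\in I} E_i$ and $\sigma(\hat B) = \bigcup_{j \in J} F_j$ yields $\sum_i \hat P^{(\hat A)}(E_i) = \Id$ and $\sum_j \hat P^{(\hat B)}(F_j) = \Id$ (in the strong sense). Hence $p_i := \Tr{\hat\rho\hat P^{(\hat A)}(E_i)}$ and $q_j := \Tr{\hat\rho\hat P^{(\hat B)}(F_j)}$ are non-negative numbers with $\sum_i p_i = \sum_j q_j = \Tr{\hat\rho} = 1$. They are therefore genuine (at most countable) discrete probability distributions, so $H_{\hat\rho}(\hat A)$ and $H_{\hat\rho}(\hat B)$ are bona fide classical Shannon entropies in the sense of (\ref{SE}).

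Next, I would establish the pairwise operator bound
\begin{equation*}
p_i + q_j = \Tr{\hat\rho\bigl(\hat P^{(\hat A)}(E_i) + \hat P^{(\hat B)}(F_j)\bigr)} \leq \|\hat P^{(\hat A)}(E_i) + \hat P^{(\hat B)}(F_j)\| \leq c,
\end{equation*}
valid for every pair $(i,j)$, where $c := \sup_{i,j}\|\hat P^{(\hat A)}(E_i) + \hat P^{(\hat B)}(F_j)\|$. The first inequality uses $\hat X \leq \|\hat X\|\Id$ for positive bounded $\hat X$, combined with $\hat\rho \geq 0$ and $\Tr{\hat\rho} = 1$. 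The AM-GM inequality then yields $p_i q_j \leq \left((p_i+q_j)/2\right)^2 \leq (c/2)^2$ for every $i,j$.

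Finally, using $\sum_i p_i = \sum_j q_j = 1$ to rewrite the sum of the two marginal entropies as the joint entropy of the product distribution,
\begin{equation*}
H_{\hat\rho}(\hat A) + H_{\hat\rho}(\hat B) = -\sum_{i,j} p_i q_j \log(p_i q_j) \geq -\sum_{i,j} p_i q_j \log(c/2)^2 = 2\log(2/c),
\end{equation*}
which is the claimed bound. The main obstacle I anticipate is the first step: one must check that $\sigma$-additivity of the PVM really gives strong convergence of $\sum_i \hat P^{(\hat A)}(E_i)$ to $\Id$ on all of $\Hi$ (so that the $p_i$ sum to $1$ irrespective of the chosen partition), and one must handle the possibility that one of the Shannon entropies is $+\infty$, in which case the inequality is trivially satisfied since $c\leq 2$ forces the right-hand side to be a finite non-negative number. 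Once the discretisation is properly set up, the combination of positivity of $\hat\rho$, AM-GM, and the product-distribution rewriting is a short calculation.
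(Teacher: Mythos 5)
The paper itself does not prove Theorem \ref{ThEUR2}: it is stated as quoted from \cite{partovi1983entropic}. Your argument is correct and is essentially Partovi's original one, namely the chain $p_i + q_j = \Tr{\hat{\rho}\,(\hat{P}^{(\hat{A})}(E_i)+\hat{P}^{(\hat{B})}(F_j))} \leqslant \| \hat{P}^{(\hat{A})}(E_i)+\hat{P}^{(\hat{B})}(F_j)\| \leqslant c$ followed by AM--GM and an entropy lower bound; the only cosmetic difference is the last step, where the standard route uses $H(p)+H(q) \geqslant -\log\max_i p_i - \log\max_j q_j = -\log\max_{i,j} p_i q_j$ while you pass through the joint entropy of the product distribution $\{p_i q_j\}$ --- the two are equivalent, and your term-by-term comparison is valid (including the $p_i q_j = 0$ terms under the convention $0\log 0 = 0$). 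Two small points deserve to be made explicit. First, the paper's definition of a partition of the spectrum only requires $\sigma(\hat{A}) = \cup_{i} E_i$; you need the $E_i$ (and $F_j$) pairwise disjoint and the index sets at most countable in order for $\sigma$-additivity of the PVM to give $\sum_i \hat{P}^{(\hat{A})}(E_i) = \hat{P}^{(\hat{A})}(\sigma(\hat{A})) = \Id$ strongly, hence $\sum_i p_i = 1$; this is clearly the intended reading, and you rightly flag it as the point to check. Second, the statement quantifies over all $\hat{\rho} \in \tco$, but your first inequality uses $\hat{\rho} \geqslant 0$ and $\Tr{\hat{\rho}} = 1$ (so that $\Tr{\hat{\rho}\hat{X}} \leqslant \|\hat{X}\|$ for $\hat{X} \geqslant 0$); again this is what is meant --- density operators --- but it should be stated. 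With those provisos the proof is complete, and the observation that $1 \leqslant c \leqslant 2$ guarantees the right-hand side is finite and non-negative, so the inequality is trivial when either entropy is infinite.
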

	Again, we can see that if $\hat{A}$ and $\hat{B}$ commute, the RHS vanishes (since $1 \leqslant \sup_{i,j} \| \hat{P}^{(\hat{A})}(E_i) + \hat{P}^{(\hat{B})}(F_j) \| \leqslant 2$ and the upper bound is reached if and only if $\hat{P}^{(\hat{A})}(\cdot)$ and $\hat{P}^{(\hat{B})}(\cdot) $ have common eigenvectors).
	
    Using the two theorems presented here above, we are able to relate non-commutativity between operators and the probability measures associated with them (i.e. the spectral measure) using entropic uncertainty relations. In the next section, we will formalize these facts in a $C^*$-probability space, proving that there is a link between the non-commutativity of the algebra and the properties of the probability measures associated with states on it which can be characterized using entropic uncertainty relations.
	
	\subsection{Algebraic generalisation}
	
	In this section we will extend the definition of the Shannon entropy to a generic $C^*$-algebra. Consider a $C^*$- probability space $(\A,\omega)$. Using the GNS theorem we may equivalently consider the triple $(\Hi_\omega, \pi_\omega, \Psi_\omega)$. For any self-adjoint element  $a \in \A$, we may consider the bounded operator $\hat{A}_\omega:=\hat{\pi}_\omega(a)$ acting on $\boundoo$. The spectral theorem ensures that there exist a PVM $\{\hat{P}^{(\hat{A}_\omega)}(E)\}_{E \subset \sigma(\hat{A}_\omega)}$ associated to $\hat{A}_{\omega}$, thus the probability that $a$ takes value in $E$ is $\langle \Psi_\omega| \hat{P}^{(\hat{A}_\omega)}(E) \Psi_\omega \rangle$. Nevertheless we cannot use this probability directly in the definition of the Shannon entropy because in general the spectrum may have a continuous part. It is a known fact that if $\hat{A}$ is a bounded self-adjoint operator its spectrum can be split as $\sigma(\hat{A}) = \sigma_p(\hat{A}) \cup \sigma_c(\hat{A})$, where $\sigma_p(\hat{A})$ and $\sigma_c(\hat{A})$ are respectively the point and the continuous part of the spectrum. Note that at the algebraic level the classification of the spectrum may depend on the state $\omega$. To introduce a well defined notion of entropy at the algebraic level, we have to find a way to deal with the continuous part of the spectrum. Mimicking what we did in section \ref{mimik}, we introduce a  partition of the continuous part of the spectrum $\{E_i\}_{i \in I_\omega}$ (we always assume  $I_\omega$ at most countable). Let us label with $\varepsilon$ a generic partition, then given $\varepsilon$ we can always construct a probability distribution $\{p_i^{(\omega,\varepsilon)}\}_{i \in \sigma_p(\hat{A}_\omega) \cup I_\omega}$ for $a \in \A$ as
	\begin{equation*}
		p_i^{(\omega,\varepsilon)} := 
		\begin{cases}
			\langle \Psi_\omega| \hat{P}^{(\hat{A}_\omega)}(\{i\}) \Psi_\omega \rangle &\mspace{30mu}\mbox{ if } i \in \sigma_p(\hat{A}_\omega) \\
			\langle \Psi_\omega| \hat{P}^{(\hat{A}_\omega)}(E_i) \Psi_\omega \rangle &\mspace{30mu}\mbox{ if } i \in I_\omega.
		\end{cases}
	\end{equation*}
	Note that these probabilities clearly depend on the partition chosen, as well as on the state. Using the probability distribution constructed in this way, we can apply without problems the definition of the Shannon entropy to any self-adjoint element of $\A$.
	\begin{definition}
		Let $(\A,\omega)$ be a $C^*$-probability space. Fix a partition $\varepsilon$ and constructs for some self-adjoint $a \in \A$ the probability distribution $\{p_i^{(\omega,\varepsilon)}\}_{i \in \sigma_p(\hat{A}_\omega) \cup I}$, where $\hat{A}_\omega = \hat{\pi}_\omega (a)$. Then the \emph{$\varepsilon$-Shannon entropy} of $a \in \A$ is given by
		\begin{equation*}
			H_\omega(a;\varepsilon) := - \sum_{i \in \sigma_p(\hat{A}_\omega) \cup I_\omega} p_i^{(\omega,\varepsilon)} \log p_i^{(\omega,\varepsilon)}
		\end{equation*}
	\end{definition}
	Since the probabilities depend on the partition, the entropy depends also on the partition of the spectrum as well. Thanks to this definition we can define in a proper manner an entropic uncertainty relation in an algebraic contest. 
	\begin{definition}
		Let $\A$ be a $C^*$-algebra and consider two random variables $a,b \in \A$ on it. Choose two partitions (different in general) $\varepsilon$ and $\delta$ for $a$ and $b$, respectively. If for any $\omega$,
		\begin{equation*}
			H_\omega(a;\varepsilon) + H_\omega(b;\delta) \geqslant C(\varepsilon,\delta),
		\end{equation*}
		where $C(\varepsilon,\delta) \in \Rea^+/\{0\}$ is a constant which may depend on the partitions but not on the state, we say that $a$ and $b$ fulfil an \emph{$(\varepsilon,\delta)$-entropic uncertainty relation}.
	\end{definition}
	Note that the important part of this definition is the independence on the state of the constant $C(\varepsilon,\delta)$: the LHS is bigger than this constant \emph{for any possible state}. We need to introduce an ordering relation between partitions of the spectrum, which is nothing but the notion of  \textquotedblleft finer partition". 
	\begin{definition}
		Let $\varepsilon = \{E_i\}_{i\in I}$ and $\varepsilon'=\{E'_j\}_{j \in J}$ be two partitions. We say that $\varepsilon$ \emph{is finer than} $\varepsilon'$, written $\varepsilon \subset \varepsilon'$, if
		\begin{enumerate}
			\item[i)] $E'_j = \cup_{i \in I_j} E_i$ for some $I_j \subset I$;
			\item[ii)] $I = \cup_{j \in J} I_j$.
		\end{enumerate}
	\end{definition}
	Intuitively, a partition $\epsilon$ is finer than a partition $\epsilon'$, if combining in a suitable way the sets of $\epsilon$, we can construct all the sets of $\epsilon'$. The requirements $i)$ and $ii)$ are simply the conditions under which this combination is possible. In what follows, if we need to talk repeatedly of two partitions, say $\varepsilon$ and $\delta$, we will use the symbol $(\varepsilon,\delta)$. The writing $(\varepsilon,\delta) \subset (\varepsilon',\delta')$ means $\varepsilon \subset \varepsilon'$ and $\delta \subset \delta'$. At this point we may state the following theorem,  which relates the non-commutativity of the $C^*$-algebra and the presence of entropic uncertainty relations. 
	\begin{theorem}\label{mytheo}
		Let $\A$ be a $C^*$-algebra and take two self-adjoint elements $a,b \in \A$. If for two partitions $(\varepsilon, \delta)$ an entropic uncertainty relation holds, namely
		\begin{equation*}
			H_\omega(a;\varepsilon) + H_\omega(b;\delta) \geqslant C(\varepsilon,\delta),
		\end{equation*}
		with $C(\varepsilon,\delta)>0$, and this happens for any possible state $\omega$ over $\A$, then $[a,b] \neq 0$.
	\end{theorem}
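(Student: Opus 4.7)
My plan is to prove the contrapositive: assuming $[a,b]=0$, I will construct a single state $\omega$ on $\A$ for which both Shannon entropies $H_\omega(a;\varepsilon)$ and $H_\omega(b;\delta)$ vanish, contradicting the hypothesis that $H_\omega(a;\varepsilon)+H_\omega(b;\delta)\geqslant C(\varepsilon,\delta)>0$ is supposed to hold for every state.

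Since $a$ and $b$ are commuting self-adjoint elements, the closed $^*$-subalgebra $\B\subset\A$ generated by $\{a,b,\unit\}$ is an abelian unital $C^*$-algebra. By the commutative Gel'fand-Naimark theorem, $\B\cong C(\Delta(\B))$, where $\Delta(\B)$ is the space of characters. Each character $\chi\in\Delta(\B)$ is a pure state on $\B$, and the pair $(\chi(a),\chi(b))=(\lambda_a,\lambda_b)$ sweeps out the joint spectrum of $(a,b)$; I fix one such $\chi$ with $\lambda_a\in\sigma(a)$, $\lambda_b\in\sigma(b)$.

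I then extend $\chi$ to a state $\omega$ on the full algebra $\A$, which is always possible for states on a unital $C^*$-subalgebra via a standard Hahn-Banach plus positivity argument. Passing to the GNS triple $(\Ho,\pi_\omega,\Psi_\omega)$, set $\hat A_\omega:=\pi_\omega(a)$. For any continuous $f$ on $\sigma(a)$, the functional calculus gives $\langle\Psi_\omega|f(\hat A_\omega)\Psi_\omega\rangle=\omega(f(a))=\chi(f(a))=f(\lambda_a)$, where the second equality uses $f(a)\in\B$. By uniqueness in the Riesz-Markov theorem, the spectral measure of $\hat A_\omega$ at the vector $\Psi_\omega$ must equal the Dirac measure $\delta_{\lambda_a}$; in particular $\hat P^{(\hat A_\omega)}(\{\lambda_a\})\Psi_\omega=\Psi_\omega$, so $\lambda_a\in\sigma_p(\hat A_\omega)$, and the same is true for $\hat B_\omega:=\pi_\omega(b)$ with eigenvalue $\lambda_b$. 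Therefore, for any partition $\varepsilon$ of the continuous part of $\sigma(\hat A_\omega)$, the associated distribution satisfies $p_{\lambda_a}^{(\omega,\varepsilon)}=1$ with all other entries vanishing, hence $H_\omega(a;\varepsilon)=0$, and symmetrically $H_\omega(b;\delta)=0$. The sum is $0$, contradicting $C(\varepsilon,\delta)>0$.

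The delicate step is the third one: verifying that the Hahn-Banach extension $\omega$, which in general is highly non-unique off $\B$, still produces Dirac spectral measures for $a$ and $b$ in the possibly very large GNS representation of $\A$. The key observation is that the spectral measure $\mu_{\Psi_\omega}$ of a single self-adjoint $\hat A_\omega$ is determined by the numerical values $\omega(f(a))$ for $f$ continuous on $\sigma(a)$, and for $a\in\B$ these values are completely fixed by the restriction $\chi=\omega|_\B$; thus the freedom in extending $\chi$ off $\B$ is invisible to the entropy of $a$ alone, and analogously for $b$. The rest is bookkeeping with the definitions of $H_\omega(\,\cdot\,;\varepsilon)$ and $p^{(\omega,\varepsilon)}_i$.
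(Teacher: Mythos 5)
Your proof is correct, and it takes a genuinely different route from the paper's. Both arguments establish the contrapositive by exhibiting, for commuting $a,b$, a single state whose spectral measures for $a$ and $b$ are simultaneously Dirac, but the constructions differ. The paper passes to the universal (Gel'fand--Naimark) representation, invokes the joint-PVM theorem for commuting bounded self-adjoint operators to write $a=f_1(c)$ and $b=f_2(c)$ for a single element $c$ of the strong closure $\A^{vn}$, and then chooses a state whose spectral measure for $c$ is $\delta_\gamma$, pushing it forward through $f_1,f_2$. You instead work inside the abelian unital $C^*$-subalgebra $\B$ generated by $a$, $b$ and $\unit$, pick a character $\chi$, and extend it to a state on all of $\A$ by Hahn--Banach; the Dirac form of the spectral measures then follows from Riesz--Markov uniqueness because $\omega(f(a))=f(\chi(a))$ is already fixed by $\omega|_{\B}=\chi$ for every continuous $f$. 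Your route is shorter and more elementary: it avoids the universal representation, the strong closure, and joint spectral measures altogether. What the paper's construction buys is an explicit joint probability space for the commuting pair, which the author wants for expository reasons (the appendix announces the intent to ``prove this fact in a probabilistic manner'' by constructing the random variables and their joint probability space). Your treatment of the delicate point --- that the non-uniqueness of the extension off $\B$ is invisible to the spectral measure of $a$ alone --- is precisely the justification that the paper's Step 3 glosses over when it asserts that a delta-like spectral measure ``is always possible for a single random variable''. Finally, your observation that $\lambda_a$ lies in the point spectrum of $\pi_\omega(a)$, so that $p^{(\omega,\varepsilon)}_{\lambda_a}=1$ independently of how the continuous part of the spectrum is partitioned, is exactly what is needed to match the paper's definition of the $\varepsilon$-Shannon entropy, and you supply it.
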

	\begin{proof}[Idea of Proof]
		We explain the idea of the proof, whose mathematical details can be found in Appendix B.
		
		We have already seen that dispersion-free states do not exist in a non-commutative probability space. Consider a non-abelian $C^*$-algebra $\A$, and $a,b \in \A$, such that $[a,b]\neq0$. Given a state $\omega$, let $\hat{A}_\omega = \hat{\pi}_\omega(a)$ and $\hat{B}_\omega = \hat{\pi}_\omega(b)$ be the two associated GNS representations acting on $\Ho$. Assume that the spectrum is purely continuous for such representations, the discrete case can be thought as a sub-case of this. Since $[a,b]\neq0$, we cannot find a state $\omega$, which has a delta-like probability distributions (i.e. spectral measures) for both $\hat{A}_\omega$ and $\hat{B}_\omega$, for any possible partitions of the two spectra we can consider. The best we can do, is to choose $\omega$ which has a delta-like probability distribution for \emph{only one} of the two random variables: hence, we are in the situation of Figure \ref{fig:0}.
		\begin{figure}[h!]
			\includegraphics[scale=0.5]{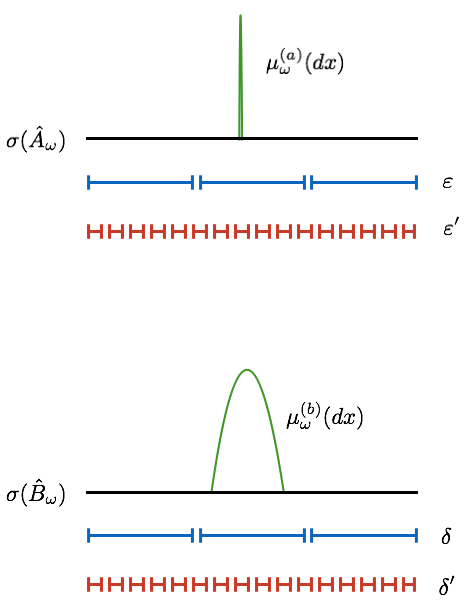}
			\caption[sg,small]{The case of a state which is not dispersion-free. As one can see we can have for all possible partitions, a delta-like probability distribution for the algebraic random variable $a$. This cannot happen for $b$, when $[a,b]\neq 0$: there are partitions for which the probability distribution of $b$ cannot be delta-like. }
			\label{fig:0}
		\end{figure}
		Now, take $(\varepsilon,\delta)$ and $(\varepsilon',\delta')$ such that $(\varepsilon',\delta') \subset (\varepsilon,\delta)$. If $\omega$ induces the two probability distributions in the picture, we can see that:
		\begin{enumerate}
			\item[i)] There are partitions where the two probability distributions $\{p_i^{(\omega,\varepsilon)}(a)\}_{i \in I}$ and $\{p_i^{(\omega,\delta)}(b)\}_{i \in I}$, have a delta-like shape (i.e. all the probabilities are $0$ except for one set of the partition). This is the case of the partitions $(\varepsilon,\delta)$ in the Figure \ref{fig:0}.
			For these partitions, we have no entropic uncertainty relations.
			\item[ii)] There are partitions where only one of the two probability distributions still have a delta-like shape. This is the case of the partitions $(\varepsilon',\delta')$ in the Figure \ref{fig:0}. In this case, we have an entropic uncertainty relation.
		\end{enumerate}
		Hence, if dispersion-free states do not exist (i.e. the algebra is non-commutative) we can find a partition for which an entropic uncertainty relation holds. It is not difficult to understand that if an entropic uncertainty relation holds for a couple of partitions, it also holds for any finer couple of partitions. On the other hand, it is not difficult to see that if an entropic uncertainty relation is found for a partition $(\varepsilon,\delta)$, automatically it holds for all the finer partitions $(\varepsilon',\delta') \subset (\varepsilon,\delta)$. Thus there are no dispersion-free states on the algebra, which means it is not abelian.
	\end{proof}
	Note that this theorem gives a way to test if two algebraic random variables commute or not, using purely probabilistic concepts. This result generalizes in the algebraic contest the content of the theorems \ref{ThEUR1} and \ref{ThEUR2} seen in the previous section. Note that the constant $C(\varepsilon,\delta)$ may depend on the partition, while, in the theorems \ref{ThEUR1} and \ref{ThEUR2}, this dependence is absent. This fact is suggesting that the dependence on the partition is more an artifact due to the definition we gave for $\varepsilon$-Shannon entropy in a $C^*$-algebraic contest, instead of something deeper. Hence such dependence could be eliminated, but we were not able to do so. Finally note that this theorem does not say anything about the bound (contrary to theorems \ref{ThEUR1} and \ref{ThEUR2}), but it asserts only that if it exists for all states then the algebra is non-commutative.
	
	The main difficulty in the use of this test for non-commutativity, is that the LHS of the inequality must be varied over all the possible states. Fortunately, a further simplification can be done. Consider a state $\omega: \A \rightarrow \Comp$ and let $\mathcal{S}(\A)$ be the set of all states on $\A$. Then we say that the state is \emph{pure} if it cannot be written as convex combination of other states (i.e. $\nexists \omega_1,\omega_2$ such that $\omega(\cdot) = \lambda \omega_1(\cdot) + (1-\lambda)\omega_2(\cdot)$ for some $\lambda \in [0,1]$), otherwise it is said to be \emph{mixed}. Let $\mathcal{S}_p(\A)$ denote the set of all the pure states on $\A$.
	\begin{corollary}\label{mycor}
		Consider a $C^*$-algebra $\A$ and take two self-adjoint elements $a,b \in \A$. If an entropic uncertainty relation between $a$ and $b$ holds for all $\omega \in \mathcal{S}_p(\A)$ , then it also holds for any state in $\mathcal{S}(\A)$.
	\end{corollary}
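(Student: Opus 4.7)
The plan is to exploit two structural facts: first, that the probability assignment $\omega \mapsto p_i^{(\omega,\varepsilon)}$ is affine (linear in $\omega$), and second, that the Shannon entropy $H(\{p_i\}) = -\sum_i p_i \log p_i$ is a concave function of the probability vector. Combining these, any mixed state's entropy dominates the corresponding convex combination of pure-state entropies, so a lower bound valid on $\mathcal{S}_p(\A)$ automatically propagates to $\mathcal{S}(\A)$.

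Concretely, I would first fix self-adjoint $a,b \in \A$ and partitions $\varepsilon, \delta$ of their spectra, and suppose $H_{\omega'}(a;\varepsilon) + H_{\omega'}(b;\delta) \geqslant C(\varepsilon,\delta)$ for every $\omega' \in \mathcal{S}_p(\A)$. Take an arbitrary $\omega \in \mathcal{S}(\A)$. The key step is to invoke a representation of $\omega$ as a barycenter of pure states: for a $C^*$-algebra with unit, the state space is weak-$*$ compact and convex, and by the Krein--Milman/Choquet theorem there exists a probability measure $\mu$ concentrated on $\mathcal{S}_p(\A)$ (or its closure) such that $\omega(\cdot) = \int \omega'(\cdot)\, d\mu(\omega')$. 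For the purposes of this proof, the simplest case (finite/countable convex combinations $\omega = \sum_k \lambda_k \omega_k$ with $\omega_k$ pure) already conveys the mechanism, and the integral version follows by the same argument with sums replaced by integrals.

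The next step is to note that for each set $E_i$ appearing in the partition $\varepsilon$, the map
\begin{equation*}
\omega \longmapsto p_i^{(\omega,\varepsilon)} = \langle \Psi_\omega | \hat{P}^{(\hat{A}_\omega)}(E_i)\, \Psi_\omega \rangle
\end{equation*}
is affine in $\omega$, because in algebraic terms $p_i^{(\omega,\varepsilon)} = \omega(\chi_{E_i}(a))$ depends linearly on $\omega$ when $\chi_{E_i}(a)$ is constructed via the continuous/bounded functional calculus. Hence $p_i^{(\omega,\varepsilon)} = \sum_k \lambda_k\, p_i^{(\omega_k,\varepsilon)}$, and the analogous identity holds for $b$ with partition $\delta$. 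Now apply concavity of Shannon entropy to each coordinate:
\begin{equation*}
H_\omega(a;\varepsilon) \;\geqslant\; \sum_k \lambda_k\, H_{\omega_k}(a;\varepsilon), \qquad H_\omega(b;\delta) \;\geqslant\; \sum_k \lambda_k\, H_{\omega_k}(b;\delta).
\end{equation*}
Adding the two inequalities and using the pure-state bound for each $\omega_k$:
\begin{equation*}
H_\omega(a;\varepsilon) + H_\omega(b;\delta) \;\geqslant\; \sum_k \lambda_k \bigl[ H_{\omega_k}(a;\varepsilon) + H_{\omega_k}(b;\delta) \bigr] \;\geqslant\; \sum_k \lambda_k\, C(\varepsilon,\delta) = C(\varepsilon,\delta),
\end{equation*}
which is the required bound on $\mathcal{S}(\A)$.

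The main obstacle is not conceptual but technical: in full generality one cannot always write an arbitrary mixed state as a \emph{countable} convex combination of pure states, so the sums above must be replaced by Choquet-type integrals $\omega = \int_{\mathcal{S}_p(\A)} \omega'\, d\mu(\omega')$. One then has to verify that the affinity of $\omega \mapsto p_i^{(\omega,\varepsilon)}$ is preserved under integration and that concavity of $H$ (i.e. Jensen's inequality) extends to this integral form. Both are standard once the barycentric representation is in place, so the corollary follows cleanly by Jensen applied coordinatewise.
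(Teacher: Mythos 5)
Your proof is correct and follows essentially the same route as the paper's Appendix C: the affinity of $\omega \mapsto p_i^{(\omega,\varepsilon)}$ combined with concavity of the Shannon entropy (Jensen) transfers the pure-state bound to convex combinations. The paper only writes out the binary case $\omega = \lambda\omega_1 + (1-\lambda)\omega_2$ and leaves the decomposition of a general mixed state into pure states implicit, so your explicit appeal to a Choquet/barycentric representation is a (welcome) sharpening of the same argument rather than a different method.
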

	\begin{proof}
		See appendix C.
	\end{proof}
	Hence it is sufficient to check this relation by varying $\omega$ on the pure states only.

    \section{Non-commutativity from ordinary measure-theoretic probability}\label{section5}
    
    We have seen that if we want to model random phenomena, we can use two (apparently) different mathematical structures: a measure space, $\PS$, or an algebra with a state $(\A, \omega)$. On the other hand, random phenomena in the subatomic world can only be described using algebraic probability, since the non-commutative behavior seems to play a fundamental role. Here we want to discuss a possible method to obtain a non-commutative behavior of the probability starting from a collection of probability spaces.
    
    \subsection{The general method in the algebraic setting}
    
    Before we explain the method in the Hilbert space setting, let us discuss the idea from the algebraic point of view. Suppose we have two real random variables $a$ and $b$. Instead of describing them in the measure-theoretic language, we wish to describe them using the abelian algebras that they generate, say $\A_a $ and $\A_b$ respectively. This means that the algebra $\A_a$ is the abelian algebra generated by the identity, $a$ and all its polynomial $p(a)$. The same for $\A_b$. On these two algebras, we can define states: we label by $\omega_a$ states on $\A_a$ and by $\omega_b$ states on $\A_b$. Now we assume the following: \emph{there exist a 1-1 map between states on $\A_a$ and states on $\A_b$}. This means that to a given state $\omega_a$ on $\A_a$, we can associate in a unique way a state $\omega_b$ on $\A_b$. This map allows neglecting the labels $a$ and $b$ in the symbol of the state $\omega$. Let us now set $\A$ as the smallest $C^*$-algebra containing both $\A_a$ and $\A_b$ as subalgebras (i.e. the algebra generated by the identity, $a$, $b$ and polynomials $p(a,b)$). By theorem \ref*{mytheo}, if we can prove that $H_{\omega}(a) + H_{\omega}(b) \geqslant D$ for all $\omega$ (the dependence on the partitions is omitted for simplicity), we know that $\A$ is a non-abelian algebra. Since $\A$ is non-abelian, the GNS theorem allows to represent it as an algebra of bounded operators on a suitable Hilbert space. We cannot represent $\A$ as an algebra of functions. Thus, starting from two ordinary random variables defined in two \emph{different} probability spaces, we end up with an Hilbert space description where both random variables are present as operators, but they do not commute.
	
	\subsection{The Hilbert space structure from the entropic uncertainty relation}\label{sec5a}
	
	Previously we presented in algebraic setting a method to obtain a non-commutative probability space starting from two ordinary measure-theoretic probability spaces. In this section we will explain how to construct a concrete algebraic probability space (i.e. already represented on an Hilbert space) starting from the probability spaces of two random variables, assuming that they fulfill an entropic uncertainty relation. To keep the discussion simple, we restrict ourself to finite discrete random variables.
	
	Let $X:(\Omega,\E,P) \rightarrow (M,\mathcal{M})$ and $Y:(\Omega',\E',P') \rightarrow (N,\mathcal{N})$ be two discrete random variables and as usual $\mu_X:= P \circ X^{-1}$ and $\nu_Y:= P' \circ Y^{-1}$ label their probability distributions. We assume the following conditions:
	\begin{enumerate}
		\item[i)] we have a 1-1 map between $P$ and $P'$, i.e. to each probability distribution $\mu_X$ for $X$ we can associate a corresponding probability distribution $\nu_Y$ for $Y$ and viceversa;
		\item[ii)] $M$ and $N$ have the same cardinality, i.e. $X$ and $Y$ have the same number of possible distinct outcomes;
		\item[iii)] $X$ and $Y$ fulfil an entropic uncertainty relation, namely for any $\mu_X$ and $\nu_Y$
		\begin{equation*}
			H(X) + H(Y) \geqslant D
		\end{equation*}
		with $D>0$.
	\end{enumerate}
	In section \ref{sec1e}, we have seen that a consistent way to represent a random variable on an Hilbert space is obtained by using the spectral representation theorem and the spectral decomposition theorem. Thus, given the random variable $X$, we can construct the operator
	\begin{equation*}
		\hat{T}_X := \sum_{x \in M} x | x \rangle \langle x|
	\end{equation*}
	defined on the Hilbert space
	\begin{equation*}
		\Hi_X := \bigoplus_{x \in M} \Hi_x.
	\end{equation*}
	By construction $\sigma(\hat{T}_X) = M$ and $\{|x\rangle\}_{x \in M}$ is a basis of $\Hi_X$. The assumption $i)$ ensures that, in general, the operator representing the random variable $X$ cannot be used to describe also the random variable $Y$. More precisely,  as we have seen in section \ref{PinH}, if $\{|x\rangle\}_{x \in M}$ is the basis on which $\hat{X}$ is diagonal, we can represent over this basis all the random variables that are functions of $X$. Hence, thanks to the assumption $i)$, we can go beyond the simple case of $X = f(Y)$ (or $Y=g(X)$), where the map between $P$ and $P'$ is given by a simple change of variables. The random variable $Y$, being defined on a different probability space, cannot be seen in general as a function of $X$. Repeating the whole construction for the random variable $Y$, also in this case we can define an operator
	\begin{equation*}
		\hat{S}_Y = \sum_{y \in N} y | y \rangle \langle y |
	\end{equation*}
	on the Hilbert space
	\begin{equation*}
		\Hi_Y := \bigoplus_{y \in N} \Hi_y.
	\end{equation*}
	Note that this Hilbert space is not in general $\Hi_X$. Again $\sigma(\hat{S}_Y) = N$ and $\{|y\rangle\}_{y \in N}$ is a basis of $\Hi_Y$ by construction. The assumption $ii)$ ensures that the two Hilbert spaces have equal dimension, and so there exists a unitary map $\hat{U}: \Hi_X \rightarrow \Hi_Y$. This means that we can map the operator $\hat{S}_Y$ on $\Hi_X$ and $\hat{T}_X$ on $\Hi_Y$. Let us consider the first case, since the second is equivalent. The operator representing $Y$ on $\Hi_X$ is
	\begin{equation*}
		\begin{split}
			\hat{T}_Y :&= \hat{U} \hat{S}_Y \hat{U}^* \\
			&= \hat{U} \sum_{y \in N} y | y \rangle \langle y | \hat{U}^* \\
			&= \sum_{y \in N} y \hat{U}| y \rangle \langle y | \hat{U}^*.
		\end{split}
	\end{equation*}
	Let us set $|Uy\rangle := \hat{U}|y \rangle$ and note that the operator $\hat{T}_Y$ is diagonal in this basis. Since unitary transformation maps a basis into a basis, also $\{|Uy \rangle\}_{y \in N}$ is a basis and in particular it is the image under $\hat{U}$ of the basis in which $\hat{S}_Y$ is diagonal. At this point, the key observation is that if the assumption $iii)$ is true, then the basis $\{|x\rangle\}_{x \in M}$ and the basis $\{|Uy\rangle\}_{y \in N}$ do not coincide. Indeed, the entropic uncertainty relation assumed, together the theorem \ref{ThEUR1}, allows us to write that
	\begin{equation*}
		\begin{split}
			-2\log (\max_{x,y} |\langle x| Uy \rangle|) \geqslant D
		\end{split}
	\end{equation*}
	(with the equality only if one can prove that the bound is optimal) so $\max_{x,y} |\langle x | Uy \rangle| \leqslant e^{- D/2} < 1$, since $D$ is never zero. Another way to say this is that $\hat{U}$ is not the identity transformation. Note that we can reach this conclusion only because we assumed the existence of an entropic uncertainty relation: if $D=0$, then we cannot exclude that $|\langle x| Uy \rangle| = 1$ for some $x,y$ (i.e. they are the same basis). 
	
	The conclusion is that, given the entropic uncertainty relations, the two operators $\hat{T}_X$ and $\hat{T}_Y$ do not commute, thus we can describe \emph{both} random variables only on a common non-commutative algebraic probability space (i.e. with operators on an Hilbert space). How on this structure is represented the map between $P$
	and $P'$, i.e. the state, will be discussed in the next section.
	
	\subsection{Conditional probabilities and representation of states}\label{pre-final section}
	
	In the previous section, we have seen that starting from two random variables defined on two different probability spaces, if an entropic uncertainty relation holds, we can construct a non-commutative algebraic probability space where both the random variables are represented by non-commuting operators. Essential for this construction is the presence of two distinct probability space, one for each random variable. Here we want to discuss how this condition can be met in a rather simple way and the consequences of this on the map between $P$ and $P'$.
	
	Given a probability space $\PS$ and a collection of events, conditioning with respect to each of these events, generates a collection of probability spaces. More precisely, conditional probability in measure-theoretic setting is defined via the Bayes formula
	\begin{equation*}
		P_C(A) := P(A|C) = \frac{P(A \cap C)}{P(C)}  \mspace{30mu} A,C \in \E
	\end{equation*}
	$P_C$ is again a probability measure on $\Omega$, but this time it depends on the event $C$ also. Given a family of events $\mathcal{C}:=\{C_i\}_{i \in I}$, then by conditioning we obtain the collection of probability spaces $(\Omega, \E(\Omega)_{C_i}, P_{C_i})_{C_i \in \mathcal{C}}$. The trivial case $\mathcal{C} = \{C\}$ coincides with the usual measure-theoretic description, however in the more general case, this collection is called \emph{contextual probability space} \cite{khrennikov2014ubiquitous} \cite{khrennikov2009contextual} \cite{khrennikov2016random} while the $C_i$s are called \emph{context}. In the general contextual probability theory, not all the context are elements of a $\sigma$-algebra (i.e. events, as in this case). This means that it is not assumed that all the contextual probability spaces are generated by conditioning. Similar notions were introduced also in \cite{holevo2011probabilistic}, where very general results are presented, and in \cite{aerts1995quantum}.
	
	Consider now two random variables $X$ and $Y$ on $\PS$ with distributions $\mu_X = P \circ X^{-1}$ and $\nu_Y = P \circ Y^{-1}$. Assume for simplicity that they are discrete. Conditioning alone is not sufficient to ensure that they are described in two different probability spaces. Indeed, since they are functions on the same probability space, after conditioning they can always be described on a probability space $(\Omega, \E(\Omega)_{C_i}, P_{C_i})$ where, from a (conditional) joint probability distribution, $\eta_{X,Y | C_i} = P_{C_i} \circ (X^{-1},Y^{-1})$, we can derive the two marginals $\mu_{X|C_i}$ and $\nu_{Y|C_i}$ describing $X$ and $Y$ (after conditioning). However, we may proceed in a different manner. Suppose that $X$ and $Y$ are two random variables on $\PS$ with fixed transition probabilities $\alpha(x,y) := P[X=x|Y=y]$ and $\tilde{\alpha}(y,x) := P[Y=y|X=x]$. The random variables $X$ and $Y$ after conditioning are described by the conditional probability distributions $\mu_{X|C_i}$ and $\nu_{Y|C_i}$. It is not difficult to see that, if we use these fixed transition probabilities, in general
	\begin{equation}\label{nobayes}
		\alpha(x,y)\nu_{Y|C_i}(y) \neq \tilde{\alpha}(y,x)\mu_{X|C_i}(x).
	\end{equation}
	In an ordinary measure-theoretic model of probability the product of the transition probability, times the marginal gives the joint probability distribution, which is symmetric under the exchange of its arguments (it is a consequence of the fact that events are subsets of the \emph{same} sample space). In our case fixing the transition probabilities and using the conditional probabilities for the two random variables, makes impossible to define a joint probability distribution. More precisely, it does not exist a joint probability distribution which has $\mu_{X|C_i}$ and $\nu_{Y|C_i}$ as marginals, and such that $\alpha(y,x)$ and $\tilde{\alpha}(y,x)$ are the two transition probabilities which can be derived from it. Hence if we fix the transition probabilities in advance, the random variables $X$ and $Y$ after conditioning must be considered to be defined on two different probability spaces in general. Another way to see this is via \emph{Bayes theorem}. From \eqref{nobayes}, one can conclude that
	\begin{equation*}
		\delta(x|Y,C_i) = \mu_{X|C_i}(x) - \sum_y \alpha(x,y)\mu_{Y|C_i}(y) \neq 0
	\end{equation*}
	which means that the Bayes theorem does not hold. This has big consequences on the representation with a single mathematical object of the two probability distributions $\mu_{X|C_i}$ and $\mu_{Y|C_i}$. Since we are not working on a single probability space, the procedure explained in section \ref{sec1e} no longer work. In fact, if we follow this procedure we can associate to $\mu_{Y|C_i}(y)$ the trace class operator $\hat{\rho}_Y = \sum_y \mu_{Y|C_i}(y) |y \rangle \langle y |$, from which we have to conclude that
	\begin{equation*}
		\begin{split}
			\mu_{X|C_i}(x) &= \Tr{\hat{\rho}_Y |x\rangle\langle x |} \\
			&= \sum_{y} |\langle x | y \rangle|^2\mu_{Y|C_i}(y).
		\end{split}
	\end{equation*}
	Interpreting $\alpha(x,y) = |\langle x | y \rangle|^2$, we can see that only if $\delta(x|Y,C_i) = 0$ the map between $\mu_{X|C_i}$ and $\mu_{Y|C_i}$ can be described in this way. When $\delta(x|Y,C_i) \neq 0$ we have to proceed in a different way. As explained in \cite{khrennikov2009contextual}, the term $\delta(x|Y,C_i)$ play the role of the interference. In ordinary quantum theory, the interference is a consequence of the presence of a non-trivial phase factor when we project a vector on a different basis.
	Under suitable conditions on the probability distributions (among with $\alpha(x,y) = \tilde{\alpha}(y,x)$) an algorithm, for the construction of the vector $|\psi\rangle \in \Hi$ and the representation of the two random variables by means of operators on $\Hi$, is available \cite{khrennikov2014ubiquitous}\cite{khrennikov2009contextual}\cite{khrennikov2016random}. It is called \emph{Quantum-Like Representation Algorithm} (or QLRA for short). However, this algorithm has some limitations. In particular, it is fully developed only for the case of random variables having two or three possible outcomes \cite{nyman2011quantum}: only in this case, the algorithm is capable to give us a non-commutative probability space. The general case it is not fully developed, despite the difficulties seems to be more in computational side rather than mathematical one. On the other hand, the method proposed here, based on entropic uncertainty relations does not have limitations regarding the kind of random variables used. It does not tell us how to find $|\psi\rangle $ explicitly but, once that conditions $i) - iii)$ of section \ref{sec5a} are fulfilled, we know that the random phenomena must be described using a non-commutative probability space. Although this is not a tremendous improvement with respect to QRLA, this method allows to study interesting situations, as we will do in \cite{LC2} and \cite{LC3}.
	
	Before to conclude this section, we want to observe the following fact. Given $(\Omega, \E(\Omega)_{C_i}, P_{C_i})_{C_i \in \mathcal{C}}$, we cannot reconstruct the original probability space $\PS$. Additional information is required: we need $P(C_i)$. In this sense, if $\mathcal{C}$ is the set of all elementary events for a random variable $Z$, i.e. all events like $C_i:=\{Z = z_i\}$, such random variable $Z$ cannot be described with the contextual probability space obtained after conditioning. In this sense, $Z$ is no longer present in the (probabilisitic) model. Because of this fact, we will also say that the random variable $Z$ was \emph{removed} from the model. Such collection of probability spaces thus represents a tool to describe a random phenomenon, after a random variable (representing some feature of such a phenomenon) is eliminated from the description. Such an elimination procedure may not always give rise to a non-commutative representation of the probability theory describing a given phenomenon. Indeed, it is not clear if such elimination procedure implies always an entropic uncertainty relation.

	\section{Conclusion}
	
	In this article, we analyzed two possible ways to mathematically describe random phenomena. The analysis suggests that one should consider the measure-theoretic formulation and algebraic formulation of probability theory as two representations of the theory of probability: the choice of one representation over the other depends on the limitations we may have on the description of the phenomenon. This is very interesting if we apply this idea to quantum theory. There are many attempts to derive quantum theory from an underlying description in which the interpretation is well defined: they go under the name of hidden variable theories. Typically in these theories, the attempt to re-obtain the statistical prediction of quantum mechanics is done by averaging over the random variables representing hidden quantities which are not under experimental control. Quantum mechanics and its mathematical formalism are seen as a  \textquotedblleft thermodynamical limit" of the underlying model.  Plenty of no-go theorems were found and they make rather difficult to derive some physically plausible underlying theory in this way. Here a different strategy is proposed, which however does not allow to identify a unique \textquotedblleft underlying" model for quantum theory. Note however that the term \textquotedblleft underlying" is, in some sense, misleading in this case: quantum theory does not arise as a thermodynamic-like theory of some deeper reality but is considered as the theory of probability of reality. The method proposed here tells us, given a model of reality, how to test if quantum mechanics is the theory of probability of the model. This is what is done in \cite{LC2} and \cite{LC3} with the goal to re-obtain non-relativistic quantum mechanics. Maybe, this change of point of view can shed some light on the basic question \emph{\textquotedblleft Why do we have to use quantum mechanics to describe microscopic phenomena?"}.
	
	\section{Acknowledgements}
	
	I would like to thanks S. Bacchi and S. Marcantoni for endless discussions, comments, and suggestions during the writing of this article. A special thanks go to G. Gasbarri and  M. Toro\v{s} for their careful reading and their useful comments. Thank you also to Prof. V. Moretti, for his patience during my MSc thesis where a first proof of the theorem \ref{mytheo} was found, and to Prof. A. Bassi, for the useful corrections, discussions and the freedom given in this first part of my Ph.D. Least but not last, big thanks goes to G. Chiruzzi, V. Confalonieri, and C. Ferrario, for their patience and support during the genesis of the ideas reported here.

	\section{Appendix}
	
	\subsection*{A - Proof of the theorem \ref{theo2}}
	
	\begin{proof} (see \cite{massen1998quantum})
		We can easily see that $\A = \{\hat{M}_f | f \in L_\infty\PS \}$ is an algebra of operators (with involution). The state $\phi$ is clearly faithful, hence what we need to prove is that $\A$ is strongly closed. Then if this is true, by the von Neumann's double commutant theorem \cite{moretti2013spectral}, $\phi$ is always a normal state. In order to prove the strong closure, let us consider a sequence of functions $\{f_n\}_{n \in \Nat} \in L_\infty \PS$ such that
		\begin{equation*}
			s-\lim_{n\rightarrow\infty} \hat{M}_{f_n} = \hat{X}
		\end{equation*} 
		where $\hat{X}$ is some operator. The above expression is equivalent to 
		\begin{equation*}
			L_2 - \lim_{n \rightarrow \infty} \hat{M}_f \psi = \hat{X}\psi \mspace{50mu} \forall \psi \in L_2\PS
		\end{equation*}
		We may always assume, without loss of generality that $\|\hat{X}\| = 1$. Now, we need to prove that $\hat{X} = \hat{M}_f$. Let us set $f(\omega) := \hat{X} 1(\omega)$, since the identity function $1(\omega) \in L_2\PS$. Now, consider the set
		\begin{equation*}
			E_\epsilon := \{\omega \in \Omega | |f(\omega)|^2 \geqslant 1 + \epsilon\} 
		\end{equation*}
		for any $\epsilon > 0$. Clearly, $E_\epsilon \in \E$ and so, using the Cauchy – Schwarz inequality and recalling that $\|\hat{X}\| = 1$, we can write
		\begin{equation*}
			\begin{split}
				P(E_\epsilon) &= \int_\Omega \chi_{E_\epsilon}(\omega) P(d\omega) = \|\chi_{E_\epsilon} (\omega)\|^2_{L_2} \\
				&\geqslant \| \hat{X}\chi_{E_\epsilon} (\omega)\|^2_{L_2} = \| f\chi_{E_\epsilon} (\omega)\|^2_{L_2} \\
				&= \int_{E_\epsilon} |f(\omega)|^2P(d\omega) \geqslant (1 + \epsilon) P(E_\epsilon)
			\end{split}
		\end{equation*}
		which implies that $P(E_\epsilon) = 0$. Since this holds for any $\epsilon > 0$, then $|f| \leqslant 1$ almost everywhere with respect to $P$, from which we conclude that $f \in L_\infty \PS$. For $g \in L_\infty \PS$, since $L_\infty \PS \subset L_2 \PS$, we can write that:
		\begin{equation*}
			\begin{split}
				\hat{X}g &= L_2-\lim_{n \rightarrow \infty}f_n g \\ 
				&=  \hat{M}_g L_2-\lim_{n \rightarrow \infty}f_n \\
				&= \hat{M}_gf = gf.
			\end{split}
		\end{equation*}
		Since $L_\infty \PS$ is dense in $L_2 \PS$ we can conclude that $\hat{X} = \hat{M}_f$. 
	\end{proof}
	
	\subsection*{B - Proof of the theorem \ref{mytheo}}
	
	Before starting the proof, let us first explain its structure and some technical facts. The proof can be divided in two parts. In the first part (\textsc{Step 1} - \textsc{Step 3}) is just a proof that dispersion free states do not exist in a non abelian algebra. Clearly it is not the first time that this fact is proved, however here we prove this fact in a probabilistic manner: we construct explicitly the random variables and the joint probability spaces. Only in the last part (\textsc{Step 4}), the entropic uncertainty relations come into play. To explicitly construct the joint probability space and the random variables, a technical point is needed.
	\begin{theorem}[Th. 9.15 \cite{moretti2013spectral}; Cp. IV, Th. 2.3 \cite{prugovecki1982quantum}] \label{JSM}
		Let $\Hi$ be a separable Hilbert space and let $\hat{A}_1, \cdots, \hat{A}_n$ be a set of self-adjoint mutually commuting bounded operators. Let $\hat{P}^{(\hat{A}_1)}, \cdots, \hat{P}^{(\hat{A}_n)}$ be the associated PVMs, then there exists a unique PVM $\hat{P}^{(\hat{\bf{A}})}$ such that
		\begin{equation*}
			\hat{P}^{(\hat{\bf{A}})}(B_1 \times \cdots \times B_n) := \hat{P}^{(\hat{A}_1)}(B_1) \cdots \hat{P}^{(\hat{A}_n)}(B_n)
		\end{equation*}
		where $B_i \in \borel$ for any $i$. $    \hat{P}^{(\hat{\bf{A}})}:\borell{n}\rightarrow\boundo$ is called \emph{joint PVM} of $\hat{A}_1, \cdots, \hat{A}_n$.  If $F:\Rea \rightarrow \Comp$ is bounded measurable function, then
		\begin{equation*}
			\int_{\Rea^n} F(x_k(\mathbf{x})) \hat{P}^{(\hat{\mathbf{A}})}(d \mathbf{x}) = \int_{\Rea} F(x_k) \hat{P}^{(\hat{A}_k)}(d x_k) = F(\hat{A}_k)
		\end{equation*}
		where $\mathbf{x} = (x_1,\cdots,x_n) \in \Rea^n$ and $x_k(\mathbf{x})$ is the $k$-th component of $\mathbf{x}$.
	\end{theorem}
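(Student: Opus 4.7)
The plan is to construct the joint PVM on rectangles using the product formula, extend to $\borell{n}$ via a Carath\'eodory-type argument combined with polarization, verify the four PVM axioms together with uniqueness, and finally obtain the marginal functional-calculus identity by approximating $F$ with simple functions. The first step is purely algebraic: applying part (c) of the spectral decomposition theorem to $[\hat{A}_i,\hat{A}_j]=0$, every bounded measurable function of $\hat{A}_i$ commutes with $\hat{A}_j$, so in particular $\hat{P}^{(\hat{A}_i)}(B)$ commutes with $\hat{A}_j$; applying (c) a second time with $\hat{A}_j$ and the commuting operator $\hat{P}^{(\hat{A}_i)}(B)$ gives that $\hat{P}^{(\hat{A}_j)}(B')$ commutes with $\hat{P}^{(\hat{A}_i)}(B)$ for all Borel $B,B'$. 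Consequently, for any rectangle $R=B_1\times\cdots\times B_n$ the candidate $\hat{Q}(R):=\hat{P}^{(\hat{A}_1)}(B_1)\cdots\hat{P}^{(\hat{A}_n)}(B_n)$ is a product of mutually commuting orthogonal projections, hence a self-adjoint idempotent on $\Hi$.

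Next I would pass from rectangles to all Borel sets. Fix $\psi\in\Hi$ and define $\mu_\psi(R):=\langle\psi,\hat{Q}(R)\psi\rangle$ on the semi-ring $\mathcal{R}$ of Borel rectangles. Non-negativity is immediate, and countable additivity on $\mathcal{R}$ follows by iterating the $\sigma$-additivity of each one-dimensional spectral measure and using the commutativity established above. Carath\'eodory's extension theorem then yields a unique finite Borel measure $\mu_\psi$ on $\borell{n}$ with $\mu_\psi(\Rea^n)=\|\psi\|^2$. By polarization, the expression
\begin{equation*}
    q_B(\psi_1,\psi_2):=\frac{1}{4}\sum_{k=0}^{3}i^{k}\,\mu_{\psi_2+i^k\psi_1}(B),\qquad B\in\borell{n},
\end{equation*}
defines a bounded sesquilinear form on $\Hi\times\Hi$ of norm at most $1$, and Riesz's representation theorem produces a unique operator $\hat{P}^{(\hat{\mathbf{A}})}(B)\in\boundo$ with $\langle\psi_1,\hat{P}^{(\hat{\mathbf{A}})}(B)\psi_2\rangle=q_B(\psi_1,\psi_2)$. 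Positivity and $\hat{P}^{(\hat{\mathbf{A}})}(\Rea^n)=\Id$ are inherited from the $\mu_\psi$; $\sigma$-additivity in the strong operator sense follows from $\sigma$-additivity of $\mu_\psi$ together with polarization; multiplicativity $\hat{P}^{(\hat{\mathbf{A}})}(B\cap B')=\hat{P}^{(\hat{\mathbf{A}})}(B)\hat{P}^{(\hat{\mathbf{A}})}(B')$ is checked directly on rectangles using commutativity and idempotence, then extended to all Borel sets by a two-step Dynkin $\pi$--$\lambda$ argument (fix $B$ a rectangle and vary $B'$, then vary $B$). Uniqueness is automatic: any PVM agreeing with $\hat{P}^{(\hat{\mathbf{A}})}$ on the generating $\pi$-system of rectangles coincides with it on all of $\borell{n}$ by the same monotone-class argument applied to matrix elements.

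For the marginal identity, take first $F=\chi_B$ with $B\in\borel$. Then $F(x_k(\mathbf{x}))=\chi_{R_k(B)}(\mathbf{x})$, where $R_k(B):=\Rea\times\cdots\times B\times\cdots\times\Rea$ is a rectangle with $B$ in the $k$-th slot, and the product formula gives
\begin{equation*}
    \hat{P}^{(\hat{\mathbf{A}})}(R_k(B))=\hat{P}^{(\hat{A}_1)}(\Rea)\cdots\hat{P}^{(\hat{A}_k)}(B)\cdots\hat{P}^{(\hat{A}_n)}(\Rea)=\hat{P}^{(\hat{A}_k)}(B).
\end{equation*}
Linearity extends the identity $\int F(x_k(\mathbf{x}))\,\hat{P}^{(\hat{\mathbf{A}})}(d\mathbf{x})=F(\hat{A}_k)$ to simple functions, and uniform approximation of bounded measurable $F$ by simple functions, together with the operator-norm bound $\|\int g\,d\hat{P}\|\leqslant\|g\|_\infty$ recalled in section \ref{sec1d}, completes the proof. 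The main obstacle is the operator-valued Carath\'eodory extension: establishing countable additivity of $\mu_\psi$ on the semi-ring of rectangles requires careful iteration of the one-dimensional $\sigma$-additivity (crucially leveraging commutativity of the spectral projections), and the subsequent multiplicativity axiom for $\hat{P}^{(\hat{\mathbf{A}})}$ requires the two-step Dynkin argument; everything else amounts to routine book-keeping.
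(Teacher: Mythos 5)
You should first note that the paper itself offers no proof of this statement: Theorem \ref{JSM} is imported as a black box from the literature (Th.~9.15 of \cite{moretti2013spectral}, Ch.~IV Th.~2.3 of \cite{prugovecki1982quantum}) and is only \emph{used} in Appendix~B, so there is no in-paper argument to compare yours against. Judged on its own terms, your proof is the standard textbook construction of the joint spectral measure and its overall architecture is sound: the double application of part (c) of the spectral decomposition theorem to get mutual commutativity of all the spectral projections, the Carath\'eodory--polarization--Riesz chain to build $\hat{P}^{(\hat{\mathbf{A}})}$, the two-step $\pi$--$\lambda$ argument for multiplicativity and for uniqueness, and the ladder $\chi_B \to$ simple functions $\to$ uniform limits for the marginal identity are all correct.

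Two steps are asserted rather than proved, and they are the only places the argument could actually fail. First, countable additivity of $\mu_\psi$ on the semi-ring of rectangles is where the real work lives, and ``iterating the one-dimensional $\sigma$-additivity'' is not yet an argument: a countable disjoint decomposition of a rectangle need not be a grid, so you cannot apply $\sigma$-additivity one coordinate at a time to the pieces. The standard repair is to write $\chi_R=\sum_k\chi_{R_k}$ pointwise and integrate this identity successively against the complex matrix-element measures $\langle\phi|\hat{P}^{(\hat{A}_i)}(\cdot)\eta\rangle$ using dominated convergence; alternatively, you can bypass Carath\'eodory altogether by observing that $\hat{A}_1,\dots,\hat{A}_n$ generate a commutative $C^*$-algebra isomorphic to $C(X)$ for a compact $X\subset\sigma(\hat{A}_1)\times\cdots\times\sigma(\hat{A}_n)$ and invoking Riesz--Markov on the state $\langle\psi|\cdot\;\psi\rangle$, which is arguably the cleaner route. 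Second, polarization only upgrades $\sigma$-additivity of the scalar measures to \emph{weak} operator convergence of the partial sums $\hat{P}_N:=\sum_{n\leqslant N}\hat{P}^{(\hat{\mathbf{A}})}(B_n)$; to get the strong convergence a PVM requires you should add the one-line observation that, since $\hat{P}_N\leqslant\hat{P}:=\hat{P}^{(\hat{\mathbf{A}})}(\cup_nB_n)$ and both are projections, $\|(\hat{P}-\hat{P}_N)\psi\|^2=\langle\psi|(\hat{P}-\hat{P}_N)\psi\rangle\rightarrow 0$. Neither point is a wrong turn --- both are routine --- but as written they are gaps, not book-keeping.
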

	We can say that our proof is a corollary of this theorem. In particular we are interested in the consequences of it on the spectral measure. First we recall that, as a consequence of the spectral decomposition theorem, given a bounded self-adjoint operator $\hat{T}$ then its spectrum is the support of the associated PVM, i.e. $\sigma(\hat{T}) = \mbox{supp} ( \hat{P}^{(\hat{T})} )$. Now, the above theorem ensures that $\sigma(\hat{\mathbf{A}}) = \sigma(\hat{A}_1) \times \cdots \times\sigma(\hat{A}_n)$. Using the joint PVM and given a normalised $\psi \in \Hi$, we can define the \emph{joint spectral measure} simply as:
	\begin{equation*}
		\mu_\psi^{(\hat{\mathbf{A}})} (d \mathbf{x}) = \langle \psi | \hat{P}^{(\hat{A}_1)}(dx_1) \cdots \hat{P}^{(\hat{A}_n)}(dx_n) \psi \rangle
	\end{equation*}
	which is a probability measure on the probability space $(\sigma(\hat{A}_1) \times \cdots \times\sigma(\hat{A}_n), \E, \mu_\psi^{(\hat{\mathbf{A}})})$, where $\E$ is a borel $\sigma$-algebra. This is the tensor product of the probability spaces associated to the spectral measures one can construct from the single PVMs, i.e. $(\sigma(\hat{A}_k), \mathscr{B}(\sigma(\hat{A}_k)), \mu_\psi^{(\hat{A}_k)}(dx) )$. The independence properties of the probability measure $\mu_\psi^{(\hat{\mathbf{A}})}$ depend on $\psi$.
	
	Hence the existence of a joint PVM of the multiplicative form as described in the theorem, ensures the existence of a common probability space for all commuting operators when thought as random variables. If the operators $\hat{A}_1, \cdots, \hat{A}_n$ do not commute this is not anymore possible: we can always multiply them obtaining again a self-adjoint operator, the spectral theorem ensures the existence of a PVM for such product and so a probability space (i.e. a spectral measure) can be defined, but this probability space cannot be related (at least in a trivial manner, i.e. the one seen above) to the probability spaces associated to each operator.
	
	The second technical fact is the following. In general, not all the GNS representations of a $C^*$-algebra $\A$ are faithful. Faithfulness is an important property because it allows to think the whole algebra as operators over the \emph{same} Hilbert space. Luckily, there exists the (general) \emph{Gel'fand-Naimark theorem} which tells us how to construct a representation which is always faithful (the so called universal representation).
	\begin{theorem} [Th. 14.23 \cite{moretti2013spectral}]
		For any $C^*$-algebra with unit $\A$ there exists an Hilbert space $\Hi$ and an isometric $^*$-isomorphism $\Pi:\A\rightarrow \mathcal{B}_{GN}$, where $\mathcal{B}_{GN} \subset \boundo$ is a  $C^*$-sub-algebra of $\boundo$.
	\end{theorem}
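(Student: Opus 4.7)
The plan is to construct the faithful representation via the \emph{universal GNS representation}: take a direct sum of all GNS representations indexed by states of $\A$, and show that the resulting $^*$-homomorphism is isometric (hence automatically injective, and one-to-one onto its image which is automatically a $C^*$-subalgebra of $\mathcal{B}(\Hi)$).

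First, I would invoke the GNS theorem recalled earlier in the paper: for every state $\omega \in \mathcal{S}(\A)$ we obtain a triple $(\Hi_\omega, \pi_\omega, \Psi_\omega)$ with $\pi_\omega$ a $^*$-representation of $\A$ on $\Hi_\omega$ and $\langle \Psi_\omega | \pi_\omega(a) \Psi_\omega \rangle = \omega(a)$. I would then define
\begin{equation*}
\Hi := \bigoplus_{\omega \in \mathcal{S}(\A)} \Hi_\omega, \qquad \Pi(a) := \bigoplus_{\omega \in \mathcal{S}(\A)} \pi_\omega(a).
\end{equation*}
Since each $\pi_\omega$ is a contractive $^*$-homomorphism (a standard property of $^*$-representations of $C^*$-algebras, namely $\|\pi_\omega(a)\| \leqslant \|a\|$), the direct sum is well defined as a bounded operator on $\Hi$, and $\Pi$ is clearly a $^*$-homomorphism with $\|\Pi(a)\| \leqslant \|a\|$.

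The main work is the reverse inequality $\|\Pi(a)\| \geqslant \|a\|$, which is also the main obstacle. The strategy is to show that for every $a \in \A$ there exists a state $\omega_a$ with $\omega_a(a^*a) = \|a^*a\|$, so that
\begin{equation*}
\|\Pi(a)\|^2 = \|\Pi(a^*a)\| \geqslant \|\pi_{\omega_a}(a^*a)\| \geqslant \omega_a(a^*a) = \|a^*a\| = \|a\|^2,
\end{equation*}
where the second inequality uses $\omega_a(a^*a) = \langle \Psi_{\omega_a} | \pi_{\omega_a}(a^*a) \Psi_{\omega_a}\rangle \leqslant \|\pi_{\omega_a}(a^*a)\|$ and the final identity is the $C^*$-condition. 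To build $\omega_a$, I would proceed as follows: consider the abelian unital $C^*$-subalgebra $\mathcal{C} \subset \A$ generated by $a^*a$ and the unit. By the commutative Gel'fand--Naimark theorem (stated earlier in the paper), $\mathcal{C} \cong C(\Delta(\mathcal{C}))$, and on this function algebra the evaluation at a point $\phi_0 \in \Delta(\mathcal{C})$ realizing $\|a^*a\| = \sup |\hat{(a^*a)}|$ furnishes a state $\omega_0$ on $\mathcal{C}$ with $\omega_0(a^*a) = \|a^*a\|$. The Hahn--Banach theorem then extends $\omega_0$ to a continuous linear functional on $\A$ of the same norm $1$; standard arguments (preservation of positivity under norm-$1$, unital extensions of positive functionals on unital subalgebras) show the extension is again a state. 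This provides the desired $\omega_a := \omega_0$.

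Once isometry is established, injectivity is immediate, so $\Pi$ is an isometric $^*$-isomorphism onto $\Pi(\A)$. To conclude that $\mathcal{B}_{GN} := \Pi(\A) \subset \mathcal{B}(\Hi)$ is a $C^*$-subalgebra, I would note that $\Pi(\A)$ is a $^*$-subalgebra containing the unit by construction; being the isometric image of a complete (hence closed) normed space, it is norm-closed in $\mathcal{B}(\Hi)$, and the $C^*$-identity is inherited from $\A$. The main technical obstacle is therefore the construction and positivity-preservation of the extension $\omega_0 \mapsto \omega_a$; this is where one genuinely needs the interplay between the commutative Gel'fand--Naimark representation and Hahn--Banach, and it is the single nontrivial ingredient beyond the already-stated GNS construction.
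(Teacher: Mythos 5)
Your proposal is correct and is exactly the construction the paper has in mind: the paper cites this as Th.~14.23 of Moretti without proof, but immediately afterwards identifies $\Pi := \oplus_\omega \pi_\omega$ on $\Hi := \oplus_\omega \Ho$ as the universal representation, which is precisely your direct sum of GNS representations. Your treatment of the only nontrivial point --- producing, for each $a$, a state with $\omega_a(a^*a) = \|a^*a\|$ via the commutative Gel'fand--Naimark theorem on the unital subalgebra generated by $a^*a$ plus a positivity-preserving Hahn--Banach extension --- is the standard and correct way to obtain the isometry.
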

	More precisely, the universal representation of $\A$ on $\Hi$ is defined as the direct sum of all the representations with respect to $\omega$, i.e. $\Pi := \oplus_\omega \pi_\omega$. The Hilbert space is defined in a similar way, i.e. $\Hi := \oplus_\omega \Ho$. This allows to always think about $\A$ as a $C^*$-sub-algebra if $\boundo$, but for technical reasons, we will need to consider the von Neumann algebra that we can construct on $\Hi$ closing $\B_{GN}$. This forces us to the following definition.
	\begin{definition}
		Given a $C^*$-algebra with unit $\A$, then $\A^{vn} := \overline{\Pi(\A)}^s$ is the closure in the strong topology of $\A$ when it is though as an algebra of bounded operators on $\Hi$, the Hilbert space of the universal representation. $\A^{vn}$ will be called \emph{strong closure of} $\A$.
	\end{definition}
	Now we can start with the proof of theorem \ref{mytheo}.
	
	\begin{proof}
		\textsc{Step 1: $[a,b] = 0 \mspace{5mu}\Rightarrow \mspace{5mu} a = f_1(c), \mspace{2mu}b = f_2(c)$ for $c \in \A^{vn}$.}\newline
		We will prove that given $a,b \in \A$, if $[a,b]=0$ then we can always find two maps $f_1$ and $f_2$ and an element of the strong closure of the algebra $c \in \A^{vn}$ such that $a = f_1(c)$ and $b = f_2(c)$. Let $\A_c[a,b]$ be the commutative sub-algebra of $\A$ generated by $a$ and $b$. 
		
		Consider a state $\omega:\A \rightarrow \Comp$, then by the GNS theorem we may represent $\A$ using bounded operators over $\Ho$. Assume that $\hat{\pi}_\omega$ is faithful (i.e. one-to-one, we will deal with the general case at the end) and let $\hat{A}_\omega = \hat{\pi}_\omega (a)$ and $\hat{B}_\omega = \hat{\pi}_\omega (b)$ be the representation of $a$ and $b$ on it. Since $[\hat{A}_\omega,\hat{B}_\omega] = 0$ then theorem \ref{JSM} ensures that there exists a joint spectral measure, say $\hat{P}^{(\hat{C})}$, which is associated to some self-adjoint bounded operator $\hat{C}$ whose existence is guaranteed by the spectral decomposition theorem. Now, take $F:\Rea \rightarrow \Comp$ as the identity function, then theorem \ref{JSM} allows us to write
		\begin{equation*}
			\begin{split}
				\int_{\sigma (\hat{A}_\omega) \times \sigma (\hat{B}_\omega)} x_1(\alpha,\beta) \hat{P}^{(\hat{C})}(d\alpha d\beta) = \\
				= \int_{\sigma(\hat{A}_\omega)} \alpha \hat{P}^{(\hat{A}_\omega)}(d\alpha) = \hat{A}_\omega
			\end{split}
		\end{equation*}
		where $x_1$ is the projector on the 1-th component of the vector $(\alpha,\beta)$. Thus we can see that, setting $f_1(\cdot) = x_1(\cdot)$, we have $\hat{A}_\omega = f_1(\hat{C})$. Clearly the same holds for $\hat{B}_\omega$. Because by construction $\hat{C}$ commutes with either $\hat{A}_\omega$ and $\hat{B}_\omega$, then $\hat{C} \in \A_c[\hat{A}_\omega,\hat{B}_\omega]$. The chosen representation $\hat{\pi}_\omega: \A \rightarrow \boundoo$ is faithful, hence we can conclude that $c \in \A_c[a,b] \subset \A \subset \A^{vn}$, $a = f_1(c)$ and $b = f_2(c)$. Assume now that the representation is not faithful. In this case we may invoke the Gel'fand-Naimark theorem and then the same arguments apply. This time $c$ may not belong to the original algebra $\A$ (it belongs to the strong closure of $\A$,  $c \in \A_c[\hat{A}_\omega,\hat{B}_\omega] \subset \A^{vn}$ in general) because $c$ this time is a generic element of $\boundo$, not of $\B_{GN}$.
		
		\textsc{Step 2: Given $\Pi: \A\rightarrow\A^{vn}$ then $\sigma_{\A^{vn}}( \Pi(a) ) \subset \sigma_\A (a)$. }\newline
		This is a technical step. We recall that given $a \in \A$, $C^*$-algebra with unit $\unit$, \emph{the spectrum of $a$} (in $\A$) from the algebraic point of view is the set $\sigma_{\A}(a) := \{ \lambda \in \Comp | \nexists (a - \lambda \unit)^{-1} \in \A\}$.
		
		Let $\tau: \A \rightarrow \A^{vn}$ be a unit-preserving $^*$-homomorphism between $\A$ and its strong closure (it cannot be an isomorphism). An example is $\tau(\cdot) = \Pi(\cdot)$, i.e. the universal representation itself. Let $\unit$ and $\unit_{vn}$ be the identities of $\A$ and $\A^{vn}$, respectively. Then by definition of $\tau$, $\unit_{vn} = \tau(\unit)$ and so given $\lambda \in \Comp$ we have
		\begin{equation*}
			\unit_{vn} = \tau( (a - \lambda\unit)^{-1}(a - \lambda \unit) ) = \tau((a - \lambda \unit)^{-1})(\tau(a) - \lambda \unit_{vn})
		\end{equation*}
		and so we have that $(\tau(a) - \lambda \unit_{vn})^{-1} = \tau((a - \lambda \unit)^{-1})$. Now, if $\lambda \in \sigma_{\A^{vn}}(a)$, then $\nexists (\tau(a) - \lambda \unit_{vn})^{-1} \in \A^{vn}$, which is possible if and only if $\nexists (a - \lambda \unit)^{-1} \in \A$ which means that $\lambda \in \sigma_{\A}(a)$. The converse is not true in general. Thus $\sigma_{\A^{vn}}(\tau(a)) \subset \sigma_{\A}(a)$.
		
		\textsc{Step 3: $[a,b] = 0 \mspace{5mu}\Rightarrow \mspace{5mu} \exists\omega $ such that $ \mu_\omega^{(a)}(\cdot) =\delta_\alpha(\cdot) $ and $\mu_\omega^{(b)}(\cdot) = \delta_\beta (\cdot)$. }\newline
		Because $[a,b]=0$, we know that there exists a probability space where $a$ and $b$ are ordinary random variable. Let $\mu^{(a)}_\omega(dx)$ and $\mu_\omega^{(b)}(dx)$ be the spectral measure of $a$ and $b$ for a given state $\omega$ on $\A$. We want to prove that there exists a state $\omega$ such that $\mu_\omega^{(a)}(dx) = \delta_\alpha(dx)$ and $\mu_\omega^{(b)}(dx) = \delta_\beta(dx)$ for some suitable $\alpha \in \sigma_\A(a)$ and $\beta \in \sigma_\A(b)$. 
		For any state $\omega^{vn}: \A^{vn} \rightarrow \Comp$, the universal representation $\Pi:\A\rightarrow \A^{vn}$ allows us to define the corresponding state $\omega$ on $\A$, setting $\omega := \omega^{vn} \circ \Pi$. Then if $\hat{C} \in \A^{vn}$ and $f:\A^{vn} \rightarrow \A$ is a function we can write that:
		\begin{equation*}
			\begin{split}
				\omega(f(\hat{C})) &= \omega^{vn} ( \Pi( f(\hat{C}) )  ) \\
				&= \int_{\sigma_{\A^{vn}}(\hat{C})} [\Pi \circ f](\mathbf{x}) \mu_{\omega^{vn}}^{(\hat{C})}(d\mathbf{x}).
			\end{split}
		\end{equation*}
		Suppose that $\hat{C}$ is the operator of \textsc{Step 1}. By theorem \ref{JSM} we can also write that:
		\begin{equation*}
			\begin{split}
				\omega(f(\hat{C})) =  \int_{\sigma_{\A^{vn}}(\hat{A}_\omega) \times \sigma_{\A^{vn}}(\hat{B}_\omega)} [\Pi \circ f](\mathbf{x}) \mu_{\omega^{vn}}^{(\hat{C})}(d\mathbf{x}) 
			\end{split}
		\end{equation*}
		Form what we found in \textsc{Step 2}, we know that $\sigma_{\A^{vn}}(\hat{A}_\omega) \times \sigma_{\A^{vn}}(\hat{B}_\omega) \subset \sigma_\A(a) \times \sigma_\A(b)$. Let us now choose the state $\omega'^{vn}$ such that $\mu_{\omega'^{vn}}^{(\hat{C})}(d\mathbf{x}) = \delta_\gamma (d\mathbf{x})$ for some $\gamma = (\alpha_1,\beta_1)$, which is always possible for a single random variable. Now if we set $f = f_1$ as in \textsc{Step 1}, we can write that:
		\begin{equation*}
			\begin{split}
				\omega'(a) &= \int_{\sigma_{\A^{vn}}(\hat{A}_\omega) \times \sigma_{\A^{vn}}(\hat{B}_\omega)} \Pi(f_1(\mathbf{x})) \mu_{\omega'^{vn}}^{(\hat{C})}(d\mathbf{x}) \\
				&= \int_{\sigma_{\A^{vn}}(\hat{A}_\omega) \times \sigma_{\A^{vn}}(\hat{B}_\omega)} \Pi(f_1(\mathbf{x})) \delta_\gamma (d\mathbf{x}) \\
				&= \int_{\sigma_{\A^{vn}}(\hat{A}_\omega) } \Pi(x) \delta_{\alpha_1} (dx) \\
				&= \int_{\sigma_{\A}(a)} \alpha \mu_{\omega'}^{(a)}(d\alpha)
			\end{split}
		\end{equation*}
		and so $\mu_{\omega'}^{(a)}(d\alpha) = \delta_{\alpha_1}(d\alpha)$ with $\alpha_1 = f_1(\gamma) \in \sigma_{\A^{vn}}(\hat{A}_\omega) \subset \sigma_{\A}(a)$. The same holds for $b$ setting $f = f_2$. Thus for the same state, we have two delta-like probability measures for commuting observables: this is simply a proof of the existence of dispersion free states for abelian $C^*$-algebra.
		
		\textsc{Step 4: $\inf_\omega [H_\omega(a,\varepsilon) + H_\omega(b,\delta)] = 0 \Leftrightarrow [a,b]=0$.}
		
		Let $\omega'$ be the state found in \textsc{Step 3} and $\varepsilon = \{E_i\}_{i \in I}$, $\delta = \{F_j\}_{j \in J}$ two partitions. Using $\omega'$, clearly $C(\varepsilon,\delta):= \inf_\omega [H_\omega(a,\varepsilon) + H_\omega(b,\delta)] = 0$ for any $(\varepsilon,\delta)$-partition. Suppose that $C(\varepsilon,\delta) = 0$ even if $[a,b] \neq 0$ for any $(\varepsilon,\delta)$-partition. If this is possible, since the Shannon entropy is always non-negative, the only possibility to have $C(\varepsilon,\delta)=0$ is to have a delta-like spectral measure for both $a$ and $b$. In the state where this happens we have a common probability space for $a$ and $b$ (i.e. there exists a joint PVM). But this contradicts theorem \ref{JSM}. Note that we have no contradiction, if $C(\varepsilon,\delta) = 0$ just for some $(\varepsilon,\delta)$-partition. Indeed, we can always have $C(\varepsilon,\delta) = 0$ for any partition where the supports of the induced probability measures is completely contained in exactly one set of the partition, if this happens for both $a$ and $b$, namely, if $\mbox{supp} \mu_\omega^{(a)} \subset E_i \in \varepsilon$ and $\mbox{supp} \mu_\omega^{(b)} \subset F_j \in \delta$, then $C(\varepsilon,\delta) = 0$. Thus we can say that if for some partition $(\varepsilon,\delta)$
		\begin{equation*}
			H_\omega(a;\varepsilon) + H_\omega(b;\delta) \geqslant C(\varepsilon,\delta) > 0 
		\end{equation*} 
		then $[a,b] \neq 0$, and this concludes the proof.
	\end{proof}
	
	\subsection*{C - Proof of the corollary \ref{mycor}}
	
	\begin{proof}
		Let $\omega$ be a mixed state, hence it can be written as $\lambda\omega_1 + (1-\lambda)\omega_2$, for $\lambda \in (0,1)$. If $a \in \A$ is a self-adjoint element of the algebra and $\varepsilon$ is a given partition, from the definition of $p_i^{\omega,\varepsilon}$, we can see that:
		\begin{equation*}
			p_i^{\omega,\varepsilon} = \lambda p_i^{\omega_1,\varepsilon} + (1-\lambda) p_i^{\omega_2,\varepsilon}.
		\end{equation*}
		Because the entropy is a concave function, we can write that $H_\omega (a;\varepsilon) \geqslant \lambda H_{\omega_1}(a;\varepsilon) + (1-\lambda)H_{\omega_2}(a;\varepsilon)$. Hence if $a$ and  $b$ fulfil an entropic uncertainty relation for pure states, then it holds also for mixed states, with the same constant $C(\varepsilon,\delta)$. Indeed:
		\begin{equation*}
			\begin{split}
				H_\omega (a;\varepsilon) + H_\omega (b;\delta)&\geqslant \lambda [H_{\omega_1}(a;\varepsilon) + H_{\omega_1}(b;\delta)]  \\
				&\mspace{20mu}+ (1-\lambda)[H_{\omega_2}(a;\varepsilon) + H_{\omega_2}(b;\delta)] \\
				&\geqslant \lambda C(\varepsilon,\delta) + (1-\lambda)C(\varepsilon,\delta) = C(\varepsilon,\delta)
			\end{split}
		\end{equation*}
		which concludes the proof.
	\end{proof}

	\bibliographystyle{unsrt}
	\bibliography{bib-a2.bib}

\end{document}